\makeatletter \@addtoreset{equation}{section}
\newcommand{\be}{\begin{equation}}
\newcommand{\ee}{\end{equation}}
\newcommand{\bea}{\begin{eqnarray}}
\newcommand{\eea}{\end{eqnarray}}
\newcommand{\bse}{\begin{subequations}}
\newcommand{\ese}{\end{subequations}}
\newcommand{\beqa}{\begin{eqnarray}}
\newcommand{\eeqa}{\end{eqnarray}}
\newcommand{\beqar}{\begin{eqnarray*}}
\newcommand{\eeqar}{\end{eqnarray*}}
\newcommand{\bi}{\begin{itemize}}
\newcommand{\ei}{\end{itemize}}
\newcommand{\bn}{\begin{enumerate}}
\newcommand{\en}{\end{enumerate}}
\newcommand{\fixme}[1]{\textbf{FIXME: }$\langle$\textit{#1}$\rangle$}
\newcommand{\ba}{\begin{array}}
\newcommand{\ea}{\end{array}}
\newcommand{\bc}{\begin{center}}
\newcommand{\ec}{\end{center}}
\newcommand{\nnr}{\nonumber \\}
\definecolor{darkgreen}{rgb}{0,0.3,0}
\definecolor{mgreen}{rgb}{0,0.6,0}
\definecolor{darkblue}{rgb}{0,0,0.3}
\definecolor{darkred}{rgb}{0.7,0,0}
\newtheorem{theorem}{Theorem}
\newtheorem{lemma}{Lemma}
\begin{document}

\begin{titlepage}

\begin{flushright}\vspace{-3cm}
{\small
IPM/P-2015/nnn \\
\today }\end{flushright}
\vspace{0.5cm}

\begin{center}
\centerline{{\fontsize{18pt}{12pt}\selectfont{\bf{Near Horizon Structure of Extremal Vanishing Horizon Black Holes}}}} \vspace{10mm}

\centerline{\fontsize{15pt}{12pt}\selectfont{\bf{S. Sadeghian\footnote{e-mail: s\_sadeghian@alzahra.ac.ir}$^{;a,b}$, M.M. Sheikh-Jabbari\footnote{e-mail:
jabbari@theory.ipm.ac.ir}$^{;a}$,
   M.H. Vahidinia\footnote{email: vahidinia@ipm.ir}$^{;a}$, H. Yavartanoo\footnote{e-mail:
yavar@itp.ac.cn }$^{;c}$}}}

\vspace{5mm}
\normalsize
{$^a$ \it School of Physics, Institute for Research in Fundamental
Sciences (IPM),\\ P.O.Box 19395-5531, Tehran, Iran}\\
\smallskip
{$^b$ \it Department of Physics, Azzahra University,   Tehran 19938-93973, Iran}\\
\smallskip
{$^c$ \it  State Key Laboratory of Theoretical Physics, Institute of Theoretical Physics,
Chinese Academy of Sciences, Beijing 100190, China.
 }
\end{center}
\begin{abstract}
\noindent
We study the near horizon structure of Extremal Vanishing Horizon (EVH) black holes,  extremal black holes with vanishing horizon area with a vanishing one-cycle on the horizon.  We construct the most general near horizon  EVH and near-EVH ansatz for the metric  and other fields, like dilaton and gauge fields which may be present in the theory.  We prove that (1) the near horizon EVH geometry for generic gravity theory in generic dimension has a three dimensional maximally symmetric subspace; (2) if the matter fields of the theory satisfy strong energy condition either this 3d part is  AdS$_3$, or the solution is a direct product of a locally 3d flat space and a $d-3$ dimensional part; (3) these results extend to the near horizon geometry of near-EVH black holes, for which the AdS$_3$ part is replaced with BTZ geometry. We present some specific near horizon EVH geometries in 3, 4 and 5 dimensions for which there is a classification. We also briefly discuss implications of these generic results for generic (gauged) supergravity theories and also for the thermodynamics of near-EVH black holes and the EVH/CFT proposal.

\end{abstract}


\end{titlepage}
\setcounter{footnote}{0}
\renewcommand{\baselinestretch}{1.05}  

\addtocontents{toc}{\protect\setcounter{tocdepth}{2}}
\tableofcontents


\section{Introduction and motivation}\label{sec-intro}

Classification of solutions to Einstein gravity theory in diverse dimensions, coupled to various matter fields has been an active area of research for many years (basically since the conception of General Relativity); e.g. see \cite{Hawking-Ellis, Stephani,Stationary-Black-Holes, ER-review}. Black holes constitute an important sector of these solutions and the corresponding classification and uniqueness theorems provide crucial pieces of information for both classical and quantum aspects of black hole physics. Black hole solutions may be defined as space-times with event horizons or spacetimes with (null) closed trapped surfaces \cite{Hawking-Ellis}. However, the more usual specification which is what we will use here, is stationary solutions with a compact, smooth Killing horizon. The Killing horizon is the null surface generated by the orbits of a Killing vector field $\xi_H$ which becomes null at the horizon.

Extremal black holes are a special class of black holes which have vanishing surface gravity (where $\xi_H$  also becomes the velocity vector of a geodesic congruence).  In this case one can show that the horizon is degenerate (there is no bifurcate horizon) \cite{Kay-Wald-review}. Extremal black holes have vanishing Hawking temperature and hence do not Hawking radiate; have the lowest possible mass with the given set of charges (as set by the extremality bound) and  may hence be viewed as ``ground states'' of the non-extremal black holes in the same class. Moreover, in supersymmetric gravity theories, one may show that all supersymmetric (BPS) black holes are necessarily extremal. For these reasons extremal black holes have been studied extensively, especially, given their ``stability'' features \cite{Ferrara-review, Sen-review}. They also provide a fruitful test ground for studying certain quantum aspects of black holes, in particular  black hole microstate counting \cite{Sen-review, NHEG-phase-space}.

Extremal black holes, also represent a very interesting feature: the near horizon geometry of generic extremal black holes give rise to a new class of (non-black hole) solutions and contain a 2d maximally symmetric space (generically an AdS$_2$ part). The Near Horizon Extremal Geometries (NHEG's) have generically  SL(2,R)$\times$U(1)$^n$ isometry. There are powerful existence and uniqueness theorems about such solutions \cite{KL-papers,KL-review}. In particular, there are uniqueness theorems for 4d and 5d Einstein-Maxwell-Dilaton (EMD) theories and also for solutions with SL(2,R)$\times$U(1)$^{d-3}$ isometry in a generic $d$ dimensional EMD theory \cite{KL-review,Hollands}. Moreover, one can show that  NHEG's enjoy a variant of laws of black hole mechanics \cite{Wald}, the laws of NHEG dynamics \cite{NHEG-mechanics}.

The  existence and uniqueness theorems for the NHEG mentioned above,  crucially use strict smoothness of the horizon. There are, however, interesting subclass of extremal black holes which do not have smooth horizons. Among them, there is a special class,  Extremal Vanishing Horizon (EVH) black holes, which have vanishing horizon area. Let us  denote the horizon area by $A_h$ and the Hawking temperature by $T$ and consider a ``near extremal'' ($T\to 0$) geometry. If going to extremal point is such that
\be\label{EVH-def-1}
A_h, T\to 0 \qquad \frac{T}{A_h}=finite\,
\ee
and that vanishing of the horizon area comes from the vanishing of a single (compact) cycle on the horizon, we are dealing with an EVH black hole \cite{EVH/CFT,First-law}.
Various aspects of the specific, given EVH black holes, their near-horizon geometry and a possible dual 2d CFT picture for the corresponding excitations, dubbed as EVH/CFT proposal \cite{EVH/CFT}, has been analyzed in several earlier papers \cite{Bar-Horo, EVH-2,AdS4-EVH,Japanese-EVH,BTZ-EVH,EVH-3,EVH-4,EVH-Ring,Hossein-HD,Hossein-4d,Hossein-5d}.

Through analysis of several examples of EVH black hole solutions in various theories in diverse dimensions, we have learned that
\begin{enumerate}
\item In the near horizon geometry of EVH black holes we find a ``pinching AdS$_3$'', that is an AdS$_3/Z_K$ orbifold with $K\to\infty$ \cite{BTZ-EVH}.  The near horizon EVH geometry exhibits (local) SO(2,2) isometry.
\item In the near horizon geometry of ``near-EVH'' case, the pinching AdS$_3$ is replaced by a ``pinching BTZ'' geometry.
Analysis of the first law of black hole thermodynamics for near-EVH black holes is compatible with the appearance of pinching BTZ in their near horizon limit \cite{EVH/CFT,First-law,EVH-4}.
\item EVH black holes can appear in any dimension $d\geq 3$. The simplest EVH black hole is massless BTZ \cite{BTZ-EVH}. They may also appear in the asymptotic flat, asymptotic de Sitter or AdS spaces.
\item The Brown-Henneaux central charge associated with this AdS$_3$ factor is proportional to $A_h/T$ ratio in the EVH limit.
\item The near horizon limit of (near) EVH geometry is presumably a decoupling limit \cite{EVH/CFT}.
\item EVH black holes can be static or stationary.  EVH black holes can be supersymmetric (in SUSY theories) \cite{EVH-2,AdS4-EVH}. Near-EVH black holes can be extremal.
\item Black rings can also become EVH, where they meet the corresponding black holes with the same charge  \cite{EVH-Ring}.
\end{enumerate}

Some of the above features, in particular items 1-3, seems to be quite generic and not limited to the specific examples studied. In this work we would like to explore how generic features 1-3 are.
In the 3d case the situation is easy to analyze, as we know the full set of (asymptotic AdS$_3$) black hole solutions, the BTZ family \cite{BTZ-EVH}. In the 4d case, within Einstein-Maxwell-Dilaton theory, two of us \cite{EVH/CFT}, proved items 1-3. In fact, in that paper we also proved items 4,5, that in the 4d case the near horizon limit is indeed a decoupling limit. In this work, we  mainly focus on  proving items 1-3 in generic Einstein-Maxwell-Scalar theories in diverse dimensions.

The organization of this paper is as follows. In section \ref{EVH-ansatz-section},  we adopt the Gaussian null coordinates for a generic black hole geometry and work out the most general metric which exhibits \eqref{EVH-def-1},  the ``EVH ansatz''. In section \ref{NHEVH-ansatz-section}, we take the near horizon limit over the EVH ansatz and provide the near-horizon EVH (NHEVH) ansatz. In section \ref{NHEVH-EOM-section}, we discuss general implications of Einstein equations and the smoothness of the energy momentum tensor on the near-horizon EVH ansatz. We also prove that for the matter fields satisfying strong energy conditions the near-horizon EVH geometries generically have an AdS$_3$ throat. In section \ref{NHEVH-Matter-field-section}, we strict ourselves to theories with scalar and Maxwell gauge fields and work out the most general ansatz for the scalar and gauge fields for an EVH solution. We show that energy momentum tensor for these theories fit well with the assumptions made in section \ref{NHEVH-EOM-section} and hence the theorems proved there apply for this class of theories. In section \ref{near-EVH-section}, we consider the most general ansatz for the near-EVH geometries and repeat analysis of section \ref{NHEVH-EOM-section} for the near-horizon limit of near-EVH solutions. We prove that the AdS$_3$ factor of the  near-horizon EVH geometries is replaced with a BTZ for near-EVH black holes.  In section \ref{examples-section}, we discuss explicit NHEVH solutions in 3, 4 and 5 dimensions for which we have a full classification. Section \ref{discussion-section} is devoted to the summary of our results and outlook.


\section{General ansatz for stationary EVH black holes}\label{EVH-ansatz-section}

Black hole solutions of a given theory of gravity are  described by a set of parameters usually defined at the (outer) horizon, the horizon (thermodynamical) properties, like the surface gravity or the Hawking temperature $T$ and horizon angular velocity or electromagnetic potentials, and the corresponding (thermodynamically conjugate) conserved charges, usually defined in the asymptotic region. There is also the entropy $S$ which is a conserved charge defined at the (bifurcate) horizon. The horizon properties are fixed and well-defined once we choose a specific reference frame in the asymptotic region \cite{Wald}. Each point in the black hole parameter space defines a black hole solution and one may study limits over this parameter space. The EVH black holes, as defined through \eqref{EVH-def-1} limit over the parameter space, is the specific class we will focus on.

We would like to write down the most general form of EVH black hole metric. To this end, as has been shown and used in the case of extremal black holes \cite{KL-papers,KL-review}, we also find it convenient to adopt the Gaussian null coordinates. For any stationary black hole, we can write the metric as \cite{KL-review}
\bea \label{401}
ds^2=2drdv+2r{\mathsf f}_i(r,y)dvdy^i-r {\mathsf F}(r,{y}) dv^2 + {\mathsf h}_{ij}(r,{y})dy^idy^j,\qquad i,j=1,2,\cdots,d-2\,,
\eea
where  $N=\partial_{v}$ is a Killing vector field which  becomes null at the outer horizon  located at $r=0$; $\partial_v$ is the vector field which creates the outer Killing horizon. The Gaussian null coordinates covers the region outside the Killing horizon and is constructed such that $\partial_r$ is a vector field which is null everywhere.  The Hawking temperature can be calculated using the surface gravity $\kappa$ which is defined by
\bea\label{eq00ch3}
\kappa N_{\mu}=N^{\nu} \nabla_{\nu} N_{\mu} =-N^{\nu} \nabla_{\mu} N_{\nu}=-\frac{1}{2}\nabla_{\mu}(N \cdot N)\;.
\eea
Taking $\mu=r $ component of  above equation and recalling that $N_{r}=g_{r v} N^v=g_{r v}=1$, $N\cdot N=g_{vv}$, we get
\bea\label{T}
\kappa=-\frac12 \partial_{r}g_{v v}\Big|_{r=0}\quad \Longrightarrow  \quad T=\frac{\kappa}{2\pi}=\frac{ {\mathsf F}(r=0,y)}{4\pi}.
\eea
Therefore, in the extremal limit ${\mathsf F}(r=0,y)$ vanishes. Moreover, since  surface gravity is a nonnegative constant over the Killing horizon,  ${\mathsf F}(r=0,y)\geq 0$ and is  independent of $y$.

To implement the definition of EVH \eqref{EVH-def-1}, we need to consider a ``near-extremal'' black hole and take the extremal limit. Let us parameterize out-of-extremality by a small positive parameter $\epsilon$. Assuming ${\mathsf F}(r,y)$ is an analytical function near the horizon, we can expand it to get
\bea\label{Fexpans}
 {\mathsf F}(r,y)=\epsilon { { F}}^{(1)}+r  { F}(y)+\cdots\,,
\eea
where $F^{(1)}$ is a nonnegative constant ($F^{(1)}\geq 0$).  Next, we need the horizon area $A_h$:
\bea\label{S}
A(r)\equiv \int_{const.\ r} \sqrt{\det {\mathsf h}} \; d^{d-2}y\,,\qquad A_h=A(r=0)\,.
\eea
Definition of the EVH black hole requires $A_h\sim \epsilon \to 0$, while  vanishing of $A_h$ is due to a vanishing one-cycle at $r=0$. That is, we require that ${\mathsf h}_{ij}$ has only one zero eigenvalue at $r=0$. Let us denote the corresponding eigenvector by $\partial_\phi$, and decompose $y^i$ directions as $(x^a,\phi)$, where $a=1,2,\cdots,d-3$. Assuming that  $\phi$ is an isometry direction, the horizon metric ${\mathsf h}$ can be decomposed as
\bea\label{h-ij}
{\mathsf h}_{ij}dy^idy^j={\mathsf G}(r,x) d\phi^2+2{\mathsf g}_a(r,x) d\phi dx^a+ \hat\gamma_{ab}(r,x) dx^adx^b,
\eea
where
\be\label{phi-x-metric-expansion}
{\mathsf g}_a|_{r=0}=\epsilon g_a^{(1)}(x^a) ,\quad {\mathsf G}|_{r=0}=\epsilon^2 G^{(2)}(x^a)\,.
\ee

The metric \eqref{401} together with \eqref{Fexpans}, \eqref{h-ij} and \eqref{phi-x-metric-expansion}, in the $\epsilon\to 0$ limit,  provides the metric ansatz for EVH black hole, which by construction satisfies \eqref{EVH-def-1}. If this metric ansatz (possibly together with other matter field configurations) is a solution to (Einstein) equations of motion, we have an EVH black hole solution. Note that $\epsilon$ in this EVH black hole solution is one of the parameters of the black hole solution family (e.g. determining angular momentum or charge) and we get an EVH black hole at the specific $\epsilon=0$ point in the black hole parameter space. Here, we do not check if this EVH black hole ansatz provides a solution; we assume that the EVH black hole solution exists. What we will do next is to study the near horizon limit of this EVH black hole.

\section{Near horizon limit of generic EVH black hole ansatz}\label{NHEVH-ansatz-section}

For the black hole solutions defined by metric \eqref{401} and its EVH limit, horizon is located at $r=0$. We therefore, define the  near horizon EVH geometry through
\bea\label{rigidscaling}
r\rightarrow \lambda r, \qquad v\rightarrow \frac{v}{\lambda},  \qquad \phi\rightarrow \frac{\phi}{\lambda}\;,\qquad \lambda\to 0\;.
\eea
With the above limit, a generic field of this solution $\Pi(r;x)$ will show a double expansion in powers of the ``near-EVH'' parameter $\epsilon$ and the near-horizon parameter $\lambda$:
\bea\label{double-expansion}
\Pi(r;x) &=& \lambda^s(\Pi^{(0)}(x)+r\lambda \Pi^{(1)}(x)+\cdots)\cr &=&\lambda^s(\Pi^{(0,0)}(x)+\epsilon\Pi^{(0,1)}(x)+r\lambda \Pi^{(1,0)}(x)+r\lambda\epsilon \Pi^{(1,1)}(x)+\cdots),
\eea
where the parameter $s$ denotes the leading power in $\lambda$ and  can be different for different fields. As we will discuss in section \ref{NHEVH-Matter-field-section}, the EVH conditions imply that for the scalar/dilaton fields $s=0$; for Maxwell gauge fields, depending on the component, $s$ can be $-1, 0$ or $1$. For the metric components, as we will show below, in the Gaussian null coordinates, $s$ can be $0,1,2$.

Taking the above $\lambda\to 0$ limit along with the EVH $\epsilon\to 0$ limit defined in the previous section, we get
\bea
\label{genericNH}
 ds^{2}\!\!\! &=&\!\!\!-r\left( \frac{\epsilon }{\lambda }{F^{(1)}} +rF
\right) dv^{2} +2r\left( \frac{\epsilon }{\lambda }{H}^{(1)}+rH
\right) d\phi dv  +\left( \frac{\epsilon ^{2}}{\lambda ^{2}}{
{G}^{(2)}} +\frac{\epsilon }{\lambda }r{G}^{(1)}+r^{2}G\right) d\phi ^{2}\nonumber \\
\!\!\!& +&\!\!\! 2drdv+ 2r f_a   dx^{a}dv   +2\left( \frac{\epsilon }{\lambda }{g}_{a}^{(1)}+rg_{a}\right) dx^{a}d\phi +\gamma _{ab}dx^{a}dx^{b}+{\mathcal{O}}(\lambda,\epsilon )\;.
\eea
where $\gamma_{ab}(x)=\hat\gamma_{ab}(r=0,x)$, and all the functions in the above metric are only functions of $x^a$. In the above we have used the fact that all  functions in the metric \eqref{401} around the horizon at $r=0$ are smooth and analytic.

Equation \eqref{genericNH} gives the most general near-horizon near-EVH metric, where proximity to the EVH is measured by $\epsilon$ and proximity to horizon  by $\lambda$. Therefore, while we are taking $
\epsilon,\lambda\to 0$ limit one can imagine three cases:
\begin{itemize}
\item
$\frac{\epsilon}{\lambda}\ll 1$. This case corresponds to  the near-horizon EVH (NHEVH) geometry. This is the case we analyze in sections \ref{NHEVH-EOM-section} and \ref{NHEVH-Matter-field-section}.
\item  $\frac{\epsilon}{\lambda}\sim 1$, corresponds to near-EVH near-horizon limit. We  view this case as ``excitations'' over the NHEVH geometry.  This case has some parameters/functions more
than the ``background'' NHEVH geometry. This is the case we analyze in section \ref{near-EVH-section}.
\item  $\frac{\epsilon}{\lambda}\gtrsim 1$ case would correspond to far from EVH case and we would not expect to get a well-defined near-horizon geometry. So, we will exclude the case.
\end{itemize}

In this section and section 4 we consider   ${\epsilon}\ll {\lambda}$ and once we established the existence of the NHEVH geometry as a solution to a gravity theory, we study the $\epsilon\sim \lambda$ case.
For  ${\epsilon}\ll {\lambda}$ we obtain the form of our \emph{near-horizon EVH (NHEVH) ansatz}
\bea\label{NHEVH}
ds^{2} =r^{2}\left[ -Fdv^{2}+G d\phi ^{2}+2H d\phi dv\right]  +2drdv
+2r\left[ f_{a} dx^{a}dv+g_{a} dx^{a}d\phi \right] +\gamma
_{ab} dx^{a}dx^{b}\;,
\eea
where all unknown functions are only function of $x^a$.
Some remarks about the above NHEVH ansatz are in order:
\begin{enumerate}
\item $\partial_v$ is a Killing vector field which is null at $r=0$ and is timelike elsewhere if $F>0$.
\item Determinant of the above metric is proportional to $-r^2G$. As we will see in the next section, this proportionality constant is positive and hence we learn that $G(x)>0$. Also, $\partial_\phi$ is a  Killing vector and $|\partial_\phi|^2=r^2G$. Hence $\partial_\phi$ remains spacelike everywhere.
\item As a result of the near-horizon limit and in the Gaussian null coordinates we have employed, $r$ dependence of all components of metric are fixed. In particular, metric has $r^2$ terms ($v,\phi$ part), $r$ terms (the off-diagonal $dv dx^a, d\phi dx^a$ terms) and the $r$ independent pieces, $\gamma_{ab}$ and $drdv$ terms.
\item As \eqref{rigidscaling} indicates, if the original $\phi$ coordinate (which parameterizes the vanishing one-cycle of the EVH black hole) had $2\pi$ periodicity,  the $\phi$ coordinate which appears in \eqref{NHEVH} has $2\pi\lambda$ periodicity. Since we are assuming $\phi$ to be a Killing direction its periodicity does not appear in the discussions of the equations of motion (which are local). Nonetheless, once we want to compare the NHEVH solution with the original EVH black hole, to stress this reduced range of $\phi$, we call $\phi$ as the \emph{pinching direction}.
\item Determinant of the NHEVH ansatz is proportional to $r^2$. This is compatible with our expectations, and construction of the EVH geometry that the horizon area of the constant $v$ part of metric is zero at $r=0$.
\item The $d\!-\!3$ dimensional part spanned by $x^a$ coordinates and the metric $\gamma_{ab}$, will be denoted by $\gamma_{d-3}$. For EVH black holes, upon which we focus in this work, $\gamma_{d-3}$ has finite volume.
\item One may reduce the $d$ dimensional Einstein Hilbert action over $\gamma_{d-3}$ and obtain a 3d gravity theory with some scalars and vector fields. We will comment more on this in the discussion section.
\item In the NHEVH ansatz we have used (and fixed) all diffeomorphisms along the 3d $r,v,\phi$ directions. However, the 3d-foliation preserving diffeomorphisms; i.e. $x^a\to h_a(x^b)$, can  still be used to simplify form of $\gamma_{ab}(x)$ metric.
\end{enumerate}

\section{Implications of Einstein equations for the NHEVH ansatz}\label{NHEVH-EOM-section}

In the previous section we gave the most general form of the near-horizon (near) EVH metric. In this section we impose the Einstein equations on this ansatz. We assume that the EVH black hole is a solution to Einstein equations
\be\label{Ein-Eq}
G_{\mu\nu}+\Lambda g_{\mu\nu}=8\pi G_N T_{\mu\nu}
\ee
where $T_{\mu\nu}$ is the energy momentum tensor of the matter fields in our theory. We will use the units where $8\pi G_N=1$. In this section, although we choose the gravity part of our action to be Einstein-Hilbert type, we do not make any specific choice for the matter part of the theory. Here, we first analyze generic properties and features of $T_{\mu\nu}$ in the near horizon limit and study the implications of the smoothness of $T_{\mu\nu}$  on the NHEVH solutions. Assuming specific behavior for some components of $T_{\mu\nu}$ we show that  the NHEVH solution has generically a 3d part which is locally maximally symmetric.
We then prove that if $T_{\mu\nu}$ satisfies strong energy condition the NHEVH solution should necessarily have an AdS$_3$ throat, or takes the form of direct product of 3d flat space and a $d\!-\!3$ dimensional Euclidean space with finite volume $R^3\times \gamma_{d-3}$.

\subsection{Implications of smoothness of $T_{\mu\nu}$ and Einstein equations} 
\begin{lemma}\label{lemma1}
 Analyticity and finiteness of $T_{\mu\nu}$ at the  horizon in the Gaussian null coordinates imply $T_{rr}=T_{ra}=0$.
\end{lemma}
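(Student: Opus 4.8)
The plan is to exploit the rigid $r$-dependence of the NHEVH metric \eqref{NHEVH} together with the Einstein equations to constrain $T_{rr}$ and $T_{ra}$, using that the left-hand side $G_{\mu\nu}+\Lambda g_{\mu\nu}$ has an $r$-expansion whose coefficients are fixed, while $T_{\mu\nu}$ must be finite and analytic at $r=0$. Concretely, I would first compute the relevant components of the Einstein tensor for the ansatz \eqref{NHEVH}. The key observation is that $g_{rr}=0$, $g_{rv}=1$, and $g_{ra}=0$ identically, so the inverse metric component $g^{vv}=0$ and $g^{rr}=F r^2 + (\text{lower order})$-type terms appear; more importantly, because the metric is written in Gaussian null coordinates adapted to the null geodesic $\partial_r$, the $rr$ and $ra$ components of the Ricci tensor have a controlled structure. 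In fact $R_{rr}$ is the Raychaudhuri-type expression built from the $r$-derivatives of the transverse metric, and for the ansatz \eqref{NHEVH} the transverse block is $r^2(\cdots)d\phi^2 + 2r(\cdots)d\phi dx^a + \gamma_{ab}dx^a dx^b$, whose logarithmic $r$-derivative has a $1/r$-type singularity structure coming from the $r^2 G\, d\phi^2$ piece.

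The main steps, in order, would be: (1) write the full inverse metric for \eqref{NHEVH} and record the $r\to 0$ behaviour of each block; (2) compute $R_{rr}$ and show it equals $-\tfrac12 h^{ij}\partial_r^2 h_{ij} + \tfrac14 h^{ij}\partial_r h_{jk} h^{kl}\partial_r h_{li}$ (the standard Gaussian-null identity, since $g_{rr}=g_{ra}=0$), then substitute the explicit ${\mathsf h}_{ij}$ to find that $R_{rr}$ generically diverges as $r\to 0$ \emph{unless} a certain combination vanishes; (3) likewise compute $R_{ra}$; (4) invoke the Einstein equation $R_{rr} = T_{rr} - \tfrac{1}{d-2}(T - \text{trace terms})g_{rr} = T_{rr}$ since $g_{rr}=0$, so $R_{rr}=T_{rr}$ directly, and similarly $R_{ra}=T_{ra}$; (5) conclude that analyticity/finiteness of $T_{rr}$ and $T_{ra}$ at $r=0$ forces the potentially divergent pieces on the gravity side to vanish, which then forces $T_{rr}=T_{ra}=0$ at $r=0$; (6) finally note that in the NHEVH ansatz the $r$-dependence is completely fixed, so "$=0$ at $r=0$" upgrades to "$\equiv 0$" because there is no room for a finite analytic piece — any nonzero constant or higher term would be incompatible with the structure dictated by the scaling limit \eqref{rigidscaling}.

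The cleanest route is probably to do the computation directly in the $\epsilon\ll\lambda$ ansatz \eqref{NHEVH}, where $h_{ij} = r^2 G\, d\phi^2 + 2r g_a\, d\phi dx^a + \gamma_{ab}\, dx^a dx^b$. Then $\det h \propto r^2(G\gamma - g_a g_b \gamma^{ab}\cdots)$, and $h^{ij}\partial_r h_{ij} = \partial_r \log\det h = 2/r + O(1)$, while $h^{ij}\partial_r^2 h_{ij}$ and the quadratic term each carry their own $1/r^2$ and $1/r$ singularities; matching powers of $r$ in $R_{rr} = T_{rr}$ and demanding the right side be finite kills the $1/r^2$ and $1/r$ coefficients and then the finite remainder, yielding $T_{rr}=0$. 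The analogous but slightly longer computation gives $T_{ra}=0$.

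\textbf{Main obstacle.} The principal difficulty is bookkeeping: correctly inverting the metric \eqref{NHEVH} (the off-diagonal $r\, dx^a dv$ and $r\, dx^a d\phi$ terms mix the $3$d and $(d-3)$d blocks at order $r$) and tracking all $1/r^2$, $1/r$, and $O(1)$ contributions to $R_{rr}$ and $R_{ra}$ without dropping a cross-term. A secondary subtlety is justifying step (6) — that vanishing at $r=0$ promotes to vanishing identically — which really rests on the rigidity of the Gaussian-null/near-horizon form rather than on any additional equation; one should state explicitly that $T_{rr}$ in these coordinates can only be a function with a prescribed (here: identically zero) $r$-profile. I expect the argument itself to be short once the curvature components are in hand; the work is entirely in that computation, which is why I would organize it around the known Gaussian-null identity for $R_{rr}$ rather than a brute-force Christoffel calculation.
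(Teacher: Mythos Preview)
Your route is workable but substantially more elaborate than what is needed, and it misses the point of why the paper isolates this statement as a separate lemma. The paper's proof never touches the Einstein equations or the curvature: it is a two-line scaling argument applied directly to $T_{\mu\nu}$. Under the near-horizon rescaling \eqref{rigidscaling}, a covariant 2-tensor component with two lower $r$-indices picks up weight $\lambda^{-2}$ and a component with one lower $r$ and one lower $x^a$ index picks up $\lambda^{-1}$; hence in the NHEVH geometry one necessarily has $T_{rr}=T(x)/r^2$ and $T_{ra}=T_a(x)/r$. Finiteness of $T_{\mu\nu}$ at $r=0$ then forces $T(x)=T_a(x)=0$. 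No metric input, no curvature, no field equations.

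By contrast, your plan computes $R_{rr}$ and $R_{ra}$ for the ansatz \eqref{NHEVH}, invokes $R_{rr}=T_{rr}$ (using $g_{rr}=0$), and then argues that finiteness of the right-hand side kills the $1/r^2$ piece of the left-hand side, after which the absence of any subleading piece yields $T_{rr}=0$. This is logically sound --- the scaling isometry of \eqref{NHEVH} does guarantee that $R_{rr}$ is exactly $c(x)/r^2$ with no remainder --- but it is precisely the computation the paper carries out in Lemma~\ref{lemma2}, where it is used to deduce $g_a=0$ from $T_{rr}=0$. In effect you are fusing Lemmas~\ref{lemma1} and~\ref{lemma2} into one argument and running the implication in the opposite direction. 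That works, but it obscures the clean separation the paper makes: Lemma~\ref{lemma1} is purely kinematical (a statement about how tensor components behave under the near-horizon limit), while Lemma~\ref{lemma2} is the first place the Einstein equations enter. Your ``main obstacle'' of tracking all $1/r$ powers through the inverse metric and Christoffels simply does not arise in the paper's argument.
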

\begin{proof}
In order $r=0$ to be a smooth surface, $T_{\mu\nu}$ components in the Gaussian null coordinates we have adopted should remain finite and analytic at the horizon (at $r=0$).  This implies that in the near-horizon limit \eqref{rigidscaling}  components of the energy momentum tensor should scale as
$$
T_{rr}\propto \lambda^{-2}\,,\qquad T_{ra}\propto \lambda^{-1}\,,
$$
and all the other components are proportional to $\lambda^0$ or positive powers of $\lambda$. The above in particular implies that  $T_{rr}=T(x)/r^2,\ T_{ra}=T_a(x)/r$ in the near-horizon limit. Requiring that at $r=0$ components of $T_{\mu\nu}$ remain finite at $r=0$ we learn that $T(x)= T_a(x)=0$, or in other words, $T_{rr},\ T_{ra}$ are both vanishing.
\end{proof}
 The above result is of course compatible with the explicit form of the energy-momentum tensors of Maxwell-Dilaton or Maxwell-Scalar theory given in \eqref{T-Lambda}, \eqref{Dilaton-T} and \eqref{gauge-field-T}.
{\begin{lemma}\label{lemma2}
 Einstein equations for the NHEVH ansatz, regardless of the details of the matter fields, imply $ g_a=0$ and $f_a=G^{-1}\partial_a G\equiv 2 \partial_a K$. Therefore, NHEVH metric reduces to
\be
ds^{2} = e^{-2K} \left[\rho^{2}(-\tilde{F} dv^{2}+ 2\tilde{H} dv d\phi+ d\phi ^{2}) +2 dv d\rho \right]   +\gamma_{ab} dx^{a}dx^{b}.
\ee
where $\tilde{H}=e^{-2K}\ {H}$, $\tilde{F}=e^{-2K}\ {F}$ and $\rho=r e^{2K}$ .
\end{lemma}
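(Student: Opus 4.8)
The plan is to extract the full statement from only two components of \eqref{Ein-Eq}, the $rr$ and $ra$ equations. Since the NHEVH ansatz \eqref{NHEVH} has $g_{rr}=g_{ra}=0$, in these components the $\Lambda g_{\mu\nu}$ and $-\tfrac12 Rg_{\mu\nu}$ terms drop out, so (with $8\pi G_N=1$) \eqref{Ein-Eq} reduces there to $R_{rr}=T_{rr}$ and $R_{ra}=T_{ra}$; by Lemma~\ref{lemma1} both right-hand sides vanish. Hence the only input, and it is matter-independent, is $R_{rr}=0$ and $R_{ra}=0$.

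\emph{Step 1 ($R_{rr}=0\Rightarrow g_a=0$).} I would use that in Gaussian null coordinates $\partial_r$ is an affinely parametrised null geodesic field whose dual one-form $g_{r\mu}dx^\mu=dv$ is closed, so the congruence is twist-free. Its cross sections are the constant-$(v,r)$ surfaces with induced metric $\hat h_{ij}dy^idy^j=r^2G\,d\phi^2+2rg_a\,dx^ad\phi+\gamma_{ab}\,dx^adx^b$, whose determinant is $\det\hat h=r^2(G-\gamma^{ab}g_ag_b)\det\gamma$, so the expansion is $\theta=\partial_r\ln\sqrt{\det\hat h}=1/r$. The null Raychaudhuri equation (vanishing twist), after inserting $\theta=1/r$, reads $R_{rr}=\tfrac1{r^2}-\tfrac14\,\mathrm{tr}\big[(\hat h^{-1}\partial_r\hat h)^2\big]$, and evaluating the shear scalar on the explicit $\hat h$ gives
\be
R_{rr}=-\frac{\gamma^{ab}g_ag_b}{2r^2\,(G-\gamma^{cd}g_cg_d)} .
\ee
(Equivalently, one just computes $R_{rr}$ from the Christoffel symbols; this is short since the metric functions depend only on $x^a$ and all $r$-dependence is polynomial.) The denominator is nonzero: $\det g=-r^2(G-\gamma^{ab}g_ag_b)\det\gamma$, so Lorentzian signature forces $G-\gamma^{ab}g_ag_b>0$. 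Therefore $R_{rr}=0$ forces $\gamma^{ab}g_ag_b=0$, and positive-definiteness of $\gamma_{ab}$ gives $g_a=0$. As a by-product $\det g=-r^2G\det\gamma<0$ now shows $G(x)>0$ (the positivity anticipated after \eqref{NHEVH}), so $K\equiv\tfrac12\ln G$ is well defined.

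\emph{Step 2 ($R_{ra}=0\Rightarrow f_a=G^{-1}\partial_a G$).} With $g_a=0$ the ansatz has inverse metric $g^{vr}=1$, $g^{\phi\phi}=1/(r^2G)$, $g^{ab}=\gamma^{ab}$, $g^{r\phi}=-H/G$, $g^{ra}=-r\gamma^{ab}f_b$, $g^{rr}=r^2(F+H^2/G+\gamma^{ab}f_af_b)$, and I would compute $R_{ra}$ directly; since everything depends on $x^a$ only and is polynomial in $r$, this is a finite calculation, and it collapses $R_{ra}=0$ to the single first-order relation $f_a=G^{-1}\partial_a G=\partial_a\ln G=2\partial_a K$ (which in particular makes $f_a\,dx^a$ exact, so no separate integrability step is needed). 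Finally, setting $\rho=r\,e^{2K}$ and keeping $v,\phi,x^a$ fixed, $f_a=2\partial_aK$ turns $2dr\,dv+2rf_a\,dx^a dv$ into $2e^{-2K}\,dv\,d\rho$ and $r^2$ into $\rho^2 e^{-4K}$, so \eqref{NHEVH} reduces to $ds^2=e^{-2K}\big[\rho^2(-\tilde F\,dv^2+2\tilde H\,dv\,d\phi+d\phi^2)+2\,dv\,d\rho\big]+\gamma_{ab}\,dx^adx^b$ with $\tilde F=e^{-2K}F$, $\tilde H=e^{-2K}H$, which is the asserted form.

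There is no conceptual obstacle. The point needing care is in Step~1: one must invoke the Lorentzian signature (equivalently $\det g<0$) to know that $G-\gamma^{ab}g_ag_b>0$, so that $R_{rr}=0$ genuinely annihilates every $g_a$ rather than merely fixing the combination $\gamma^{ab}g_ag_b$. The most laborious piece is the explicit evaluation of $R_{ra}$ in Step~2.
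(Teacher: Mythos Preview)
Your proposal is correct and follows essentially the same approach as the paper: both reduce the $rr$ and $ra$ Einstein equations to $R_{rr}=0$ and $R_{ra}=0$ via Lemma~\ref{lemma1} and $g_{rr}=g_{ra}=0$, obtain the same expression $R_{rr}=-\dfrac{\gamma^{ab}g_ag_b}{2r^2(G-\gamma^{cd}g_cg_d)}$ (the paper writes it as $\big(\det\gamma/2\det g\big)\gamma^{ab}g_ag_b$, which is identical), conclude $g_a=0$ from positive-definiteness of $\gamma_{ab}$, then compute $R_{ra}=\dfrac{Gf_a-\partial_aG}{2rG}$ to get $f_a=2\partial_aK$, and finish with the $\rho=re^{2K}$ change of variable. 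Your Raychaudhuri framing of Step~1 and your explicit invocation of $\det g<0$ to justify the nonvanishing denominator are nice cosmetic additions, but the logic and computations are the same.
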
}
\begin{proof}
 From Lemma \ref{lemma1} and Einstein equations \eqref{Ein-Eq} and that in the Gaussian null coordinates $g_{rr}=0$ we learn that
\be
G_{rr}=0\,,\qquad G_{ra}=0\,.
\ee
Moreover, from $g_{rr}=0$, vanishing of $G_{rr}$ means  $R_{rr}=0$ and
{
\be
R_{rr}=\frac{1}{2r^2 \det(g)}\left[r^2G \det(\gamma)+\det(g)\right]\,.
\ee
On the other hand, we know that $\det(g)=-r^2\det(\gamma)\left[G-g_ag_b \gamma^{ab
} \right]$. Then $R_{rr}$ is simply written as
\bea
R_{rr}=\left(\frac{\det(\gamma)}{2\det(g)}\right) g_a g_b\gamma^{ab}\,.
\eea
Since $\gamma_{ab}$ is a positive definite metric, }vanishing $R_{rr}$ yields $g_a=0$.

Noting that in the Gaussian null coordinates $g_{ra}$ components of metric are zero, vanishing of $ra$ components of the Einstein tensor implies vanishing of similar components of the Ricci tensor, $R_{ra}=0$. One may readily compute these components for the NHEVH ansatz \eqref{NHEVH} at $g_a=0$:
{
\be
R_{ra}=\frac{G f_a-\partial_a G}{2rG}\,.
\ee
}
$R_{ra}=0$ then yields
\be
f_a=G^{-1}\partial_a G \equiv 2 \partial_a K\,.
\ee
We note that $G$ is a positive definite function and the above choice for $K$ (and its real valued-ness) is made to enforce this fact.
Given the above form for $f_a$, one can remove the $vx^a$ components of metric by the following redefinition of the $r$ coordinate
\be
r\to \rho=r e^{2K}\,.
\ee
Therefore, regardless of the matter content of the theory and only based on the existence of EVH black hole solution and smoothness of $T_{\mu\nu}$ components at the horizon, the most general NHEVH ansatz takes the form
\be\label{NHEVH-simplified}
ds^{2} = e^{-2K} \left[\rho^{2}(-\tilde{F} dv^{2}+ 2\tilde{H} dv d\phi+ d\phi ^{2}) +2 dv d\rho \right]   +\gamma_{ab} dx^{a}dx^{b}\;,
\ee
where $\tilde{H}=e^{-2K}\ {H}$ and  $\tilde{F}=e^{-2K}\ {F}$.
\end{proof}
{\begin{lemma}\label{lemma3}
For theories with $T_{\phi a}=T_{v a}=0$, $\tilde{H}$ and $\tilde{F}$ are constants.
\end{lemma}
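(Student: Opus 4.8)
The plan is to impose the two Einstein equations whose matter sources are assumed to vanish --- the $(\phi,a)$ and $(v,a)$ components --- directly on the simplified metric \eqref{NHEVH-simplified}. The key structural point is that in \eqref{NHEVH-simplified} the components $g_{\phi a}$, $g_{va}$ (and $g_{\rho a}$) already vanish, so that $g_{\phi a}=g_{va}=0$ kills both the $\Lambda g_{\mu\nu}$ term and the $-\tfrac12 R g_{\mu\nu}$ term in \eqref{Ein-Eq}; hence the hypothesis $T_{\phi a}=T_{va}=0$ is equivalent to $R_{\phi a}=0$ and $R_{va}=0$.

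I would then compute these two Ricci components for \eqref{NHEVH-simplified}. The inverse metric has, in the three-dimensional block, only the entries $g^{v\rho}=e^{2K}$, $g^{\phi\phi}=e^{2K}/\rho^{2}$, $g^{\phi\rho}=-e^{2K}\tilde H$, $g^{\rho\rho}=e^{2K}\rho^{2}(\tilde F+\tilde H^{2})$, together with $g^{ab}=\gamma^{ab}$, and $\det g=-e^{-6K}\rho^{2}\det\gamma$. Since $K,\tilde F,\tilde H$ and $\gamma_{ab}$ depend only on $x^{a}$ and $\partial_v,\partial_\phi$ are Killing vectors, the Christoffel symbols simplify considerably; the two that ultimately carry the relevant information are $\Gamma^{\rho}_{\phi a}=\tfrac12\rho^{2}\partial_a\tilde H$ and $\Gamma^{\rho}_{va}=-\tfrac12\rho^{2}\partial_a(\tilde F+\tfrac12\tilde H^{2})$. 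A useful preliminary observation is that the transverse sector decouples from these components --- the intrinsic connection $\hat\Gamma^{c}_{ab}$ of $\gamma_{ab}$ never appears in $R_{\phi a}$ or $R_{va}$ --- because $\Gamma^{c}_{\mu b}=0$ for $\mu\in\{v,\phi,\rho\}$ and $\Gamma^{\alpha}_{bc}=0$ for $\alpha\in\{v,\phi,\rho\}$. Evaluating $R_{\mu\nu}=\partial_\lambda\Gamma^{\lambda}_{\mu\nu}-\partial_\nu\Gamma^{\lambda}_{\mu\lambda}+\Gamma^{\lambda}_{\lambda\sigma}\Gamma^{\sigma}_{\mu\nu}-\Gamma^{\lambda}_{\nu\sigma}\Gamma^{\sigma}_{\mu\lambda}$ for these indices, the second term drops out (the volume density $\sqrt{|g|}=e^{-3K}\rho\sqrt{\det\gamma}$ is independent of $v$ and $\phi$), the cross term $-\Gamma^{\lambda}_{\nu\sigma}\Gamma^{\sigma}_{\mu\lambda}$ cancels identically, and only the first and third terms survive, yielding
\begin{equation}
R_{\phi a}=\frac{3}{2}\,\rho\,\partial_a\tilde H\,,\qquad R_{va}=-\frac{3}{2}\,\rho\,\partial_a\!\left(\tilde F+\tfrac12\tilde H^{2}\right)\,.
\end{equation}

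The conclusion is then immediate: since these must vanish for all $\rho>0$, $R_{\phi a}=0$ gives $\partial_a\tilde H=0$, so $\tilde H$ is a constant, and feeding this back into $R_{va}=0$ gives $\partial_a\tilde F=0$, so $\tilde F$ is a constant as well. The only real obstacle is the bookkeeping of the (fairly numerous) non-vanishing Christoffel symbols of \eqref{NHEVH-simplified} and the verification that the $-\Gamma^{\lambda}_{\nu\sigma}\Gamma^{\sigma}_{\mu\lambda}$ contributions cancel; once one notices that the transverse sector decouples from these particular components and that the $\rho$-dependence is rigidly fixed by the near-horizon scaling, the computation is short and the cancellations are clean. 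As a consistency check, note that the resulting expressions are independent of $K$ and of $\gamma_{ab}$, in harmony with Lemma \ref{lemma2}.
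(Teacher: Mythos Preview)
Your proof is correct and follows essentially the same route as the paper: reduce $T_{va}=T_{\phi a}=0$ to $R_{va}=R_{\phi a}=0$ via $g_{va}=g_{\phi a}=0$, compute these Ricci components for the metric \eqref{NHEVH-simplified}, and read off constancy of $\tilde H$ and then $\tilde F$. Your expression $R_{va}=-\tfrac{3}{2}\rho\,\partial_a(\tilde F+\tfrac12\tilde H^{2})$ is just the paper's $R_{va}=-\tfrac{3}{2}\rho\,(\tilde H\partial_a\tilde H+\partial_a\tilde F)$ written as a total derivative, and your additional remarks on the decoupling of the transverse sector simply spell out what the paper leaves implicit.
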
}
\begin{proof}
Given the above NHEVH ansatz \eqref{NHEVH-simplified}, we make the following two observations:
\begin{enumerate}
\item[A.] For class of theories with $T_{v a}=0$,  $R_{v a}=0$, and
\be
R_{va}=-\frac{3}{2}\; \rho \;( \tilde{H} \partial_a \tilde{H} +\partial_a \tilde{F})=0. 
 \ee
\item[B.] If $T_{\phi a}$ components of energy momentum tensor vanish, then we learn that $R_{\phi a}=0$. With the metric \eqref{NHEVH-simplified}
\be
R_{\phi a} =\frac{3}{2} \rho\;  \partial_a \tilde{H} =0 \quad \Longrightarrow \quad \tilde{H}=const.
\ee
\end{enumerate}
Then $T_{\phi a}=T_{v a}=0$ and Einstein equations imply $\tilde{H}$ and $\tilde{F}$  are constant.
\end{proof}
{We are now ready to note  one of the main results of this paper. }
{\begin{theorem} \label{theorem1}
The near-horizon geometry of any EVH black hole in Einstein gravity which matter fields have a finite and analytic energy momentum tensor $T_{\mu\nu}$ at the horizon and $T_{\phi a}=T_{v a}=0$, is given by the following metric
\bea\label{NHEVH-ansatz-after-AB-conditions}
ds^{2} = e^{-2K}\left[A_0\rho^2 dv^{2} +2 dv d\rho + \rho^2 d\phi^{2}\right]+\gamma
_{ab} dx^{a}dx^{b},
\eea
where $A_{0}$ is a constant and the 3d $(\rho,v,\phi)$ part of the metric is maximally symmetric.
\end{theorem}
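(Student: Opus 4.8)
The plan is to stack the three lemmas already proved and then read off the geometry. By Lemma~\ref{lemma2}, after trading $r$ for $\rho=re^{2K}$ the most general near-horizon EVH metric is already brought to the form \eqref{NHEVH-simplified}; and by Lemma~\ref{lemma3} the extra hypotheses $T_{\phi a}=T_{va}=0$, together with Einstein's equations, force $\tilde H$ and $\tilde F$ to be constants. (Recall all functions in \eqref{NHEVH-simplified} depend on $x^a$ only, so ``constant in $x^a$'' means a genuine constant.) No further input about the matter sector is needed.

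First I would remove the $dv\,d\phi$ cross term. Since $\tilde H$ is constant, complete the square,
\[
\rho^2\big(-\tilde F\,dv^2+2\tilde H\,dv\,d\phi+d\phi^2\big)=\rho^2\big(d\phi+\tilde H\,dv\big)^2-(\tilde H^2+\tilde F)\,\rho^2\,dv^2 ,
\]
and perform the change of coordinate $\phi\to\phi-\tilde H v$. Because this shift has constant coefficients, it leaves $dv$, $d\rho$ and the $dx^a$ untouched, so it affects neither the $e^{-2K}\,2\,dv\,d\rho$ term nor the transverse block $\gamma_{ab}\,dx^a dx^b$. Setting $A_0:=-(\tilde H^2+\tilde F)$, which is a constant, the metric takes exactly the form \eqref{NHEVH-ansatz-after-AB-conditions}.

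It then remains to check that the three-dimensional $(\rho,v,\phi)$ block is maximally symmetric. Since the warp factor $e^{-2K(x)}$ restricts to a constant along each leaf of fixed $x^a$, and rescaling a constant-curvature metric by a positive constant yields again a constant-curvature metric, it suffices to examine the bare three-metric $d\hat s_3^{\,2}=A_0\rho^2\,dv^2+2\,dv\,d\rho+\rho^2\,d\phi^2$. A short computation of its Christoffel symbols and curvature gives $\hat R_{\mu\nu}=2A_0\,\hat g_{\mu\nu}$, so $d\hat s_3^{\,2}$ is an Einstein metric with constant scalar curvature $\hat R=6A_0$; in three dimensions the Weyl tensor vanishes identically, hence an Einstein metric with constant $\hat R$ has constant sectional curvature and is locally maximally symmetric --- a patch of AdS$_3$ for $A_0<0$, three-dimensional flat space for $A_0=0$, and dS$_3$ for $A_0>0$. (For $A_0<0$ the further substitution $v=t-\int d\rho/(A_0\rho^2)$ puts $d\hat s_3^{\,2}$ into the standard massless-BTZ form $A_0\rho^2\,dt^2-d\rho^2/(A_0\rho^2)+\rho^2\,d\phi^2$, making the AdS$_3$ identification manifest.) This establishes the theorem.

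I expect no serious obstacle: the calculation is routine and there are only two minor points to watch. The first is that the completion-of-square shift $\phi\to\phi-\tilde H v$ must be compatible with the pinched (reduced $2\pi\lambda$) range of $\phi$; this is harmless because the whole near-horizon analysis is local and $\phi$ is treated as a Killing direction throughout. The second is that the $x^a$-dependent factor $e^{-2K}$ should not spoil maximal symmetry of the three-dimensional block --- which holds precisely because $e^{-2K}$ is constant on each 3d leaf. The only place where a slip could occur is in the bookkeeping that produces $\hat R_{\mu\nu}=2A_0\hat g_{\mu\nu}$ with exactly the factor $A_0$ and no residual $\rho$-dependent pieces.
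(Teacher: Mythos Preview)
Your proof is correct and follows essentially the same route as the paper's: invoke Lemmas~\ref{lemma1}--\ref{lemma3} to reduce to \eqref{NHEVH-simplified} with constant $\tilde H,\tilde F$, then perform the shift $\phi\to\phi-\tilde H v$ and set $A_0=-(\tilde H^2+\tilde F)$. The paper's proof is terser and simply asserts maximal symmetry of the 3d block (relegating the observation $R=6A_0$ to a subsequent comment), whereas you supply the clean argument via $\hat R_{\mu\nu}=2A_0\hat g_{\mu\nu}$ plus vanishing of the Weyl tensor in three dimensions; that extra detail is welcome and entirely in line with the paper's intent.
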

\begin{proof}
Using Lemmas \ref{lemma1},\ref{lemma2},\ref{lemma3} and a shift in $\phi\rightarrow \phi-\tilde{H} v$, the near horizon metric of an EVH black hole takes the form \eqref{NHEVH-ansatz-after-AB-conditions} where  $A_{0}=-(\tilde{H}^2+\tilde{F})$.
\end{proof}}
We would like to make some comments about the above metric:
\bi
\item[I.] The 3d $(\rho,v,\phi)$ part of metric   depends  only on a single constant $A_0$.
\item[II.] At constant $x^a$ surfaces, the 3d $(\rho,\ v,\ \phi)$ part of \eqref{NHEVH-ansatz-after-AB-conditions} is a constant curvature, maximally symmetric space with Ricci scalar $R=6A_0$. Therefore, for negative $A_0$ this 3d
part is (locally) AdS$_3$. For the special case of $A_0=0$ this becomes a locally 3d flat space and if $A_0>0$ we have a dS$_3$.
\item[III.] In our original EVH black hole, we did not impose any condition on the topology of the $d-2$ dimensional horizon, while we required that the area of horizon is finite (and is indeed vanishing in the EVH limit). This in particular, means that $\gamma_{d-3}$ is expected to  have finite volume.
\item[IV.] As we know from various explicit examples \cite{EVH-3,EVH-4}, $e^{-2K}$ may have zeros in some isolated points. At these points metric \eqref{NHEVH-ansatz-after-AB-conditions} (and possibly the $\gamma_{ab}$ metric) may have a curvature singularity. Note also that $e^{-2K}$ should remain everywhere finite as is implied by comment III. above.
{\item[V.] One can show that the geometry \eqref{NHEVH-simplified} has three Killing vector fields, whose explicit form for $A_0\neq 0$ cases may be written as\footnote{ $A_0=0$ case  corresponds to the three dimensional cone, defined by hypersurface $\rho^2=x^2+y^2$ in a flat four dimensional space with metric $ds^2=2d\rho dv+ dx^2 +dy^2$, where $\phi$ is angular coordinate in $xy$ plane. The corresponding Killing vectors are $\partial_v,\; \partial_{\phi},\;  \phi \partial_v +\frac{\partial_{\phi}}{\rho},\; v\partial_v-\rho\partial_{\rho}+\phi\partial_{\phi},\; \phi^2\partial_v-2\partial_{\rho}+\frac{2\phi}{\rho}\partial{\phi},\; 2v\phi\partial_v -2\rho \phi\partial_{\rho}+(\phi^2+\frac{2v}{\rho})\partial_{\phi}. $ }
\be\label{K1}
\partial_v\equiv -A_0(L_+-\tilde{L}_+),\quad \partial_\phi \equiv \sqrt{-A_0}\;(L_++\tilde{L}_+) \quad v\partial_v-\rho\partial_\rho+\phi\partial_\phi\equiv L_0+\tilde{L}_0.
\ee}
As discussed, appearance of the extra Killing vector field $L_0+\tilde{L}_0$ in the NHEVH geometry  is only based on the existence of the near horizon limit and is independent of imposing equations of motion or details of the theory and its dimension. This feature may be compared with a similar fact for the near-horizon extremal geometries \cite{KL-review}, where in the near-horizon geometry and before imposing the equations of motion, the isometry from just $\partial_v$ is enhanced to $\partial_v,\ v\partial_v-r\partial_r$.
\item[VI.]
As we will demonstrate in section \ref{NHEVH-Matter-field-section},  not only the metric but also the dilaton \eqref{Phi-ansatz} and gauge field configurations \eqref{gauge-field-ansatz}  are invariant under the above Killing vectors.
\item[VII.] As discussed,  if the two conditions $R_{va}=R_{\phi a}=0$ are also met, we get three more Killing vectors, whose explicit form depends on $A_0$.{ Together with (\ref{K1}), we have following six Killing vectors (for $A_0\neq0$ cases)
\be
L_+=-\frac{1}{2A_0}\partial_v+\frac{1}{2\sqrt{-A_0}}\partial_{\phi},\quad \tilde{L}_+=\frac{1}{2A_0}\partial_v+\frac{1}{2\sqrt{-A_0}}\partial_{\phi} \nonumber
\ee
\bea
&& L_0=\frac{1}{2}\left(v+\frac{\phi}{\sqrt{-A_0}}\right) \partial_v- \frac{1}{2}\rho\; \partial_\rho +\frac{1}{2}\left({\phi} +\sqrt{-A_0}v+\frac{1}{\rho\sqrt{-A_0}}\right)\partial_{\phi},\nonumber\\
&& \tilde{L}_0=\frac{1}{2}\left(v-\frac{\phi}{\sqrt{-A_0}}\right) \partial_v- \frac{1}{2}\rho\; \partial_\rho +\frac{1}{2}\left({\phi} -\sqrt{-A_0}v-\frac{1}{\rho\sqrt{-A_0}}\right)\partial_{\phi} \nonumber
\eea
\bea
&& L_-=\frac{1}{2}\left(\phi+{v}\sqrt{-A_0}\right)^2 \partial_v- \left(1-A_0\rho v +\sqrt{-A_0}\rho\phi\right) \partial_\rho \nonumber\\
&&\hspace{8mm}+\frac{1}{2\rho}\left(\sqrt{-A_0} v+\phi\right)\left(2+\sqrt{{-}A_0}\rho \phi - A_0 v\rho\right)\partial_{\phi},
\eea
 \bea
&& \tilde{L}_-=-\frac{1}{2}\left(\phi-{v}\sqrt{-A_0}\right)^2 \partial_v{+} \left(1-A_0\rho v -\sqrt{-A_0}\rho\phi\right) \partial_\rho \nonumber\\
&&\hspace{8mm}+\frac{1}{2\rho}\left(\sqrt{-A_0} v-\phi\right)\left(2-\sqrt{{-}A_0}\rho \phi - A_0 v\rho\right)\partial_{\phi},
\eea
The above six Killing vectors form the following algebra
\be
[L_i,L_j]=(i-j)L_{i+j},\qquad [\tilde{L}_i,\tilde{L}_j]=(i-j)\tilde{L}_{i+j}.
\ee
The isometry algebra is hence isomorphic to  $so(2,2)\simeq sl(2,R) \times sl(2,R)$ algebra for $A_0<0$, $so(3,1)$  algebra  for  $A_0>0$ and $iso(2,1)$  algebra for $A_0=0$.}

The above enhancement of symmetries, from the two $\partial_v$ and $\partial_\phi$ Killings of the original EVH black hole to the above six Killings is analogous to the situation in the extremal stationary black holes, where the timelike Killing vector is enhanced to three Killing vectors upon imposing (a part of) equations of motion \cite{KL-review}.
\item[VIII.] It is well known that dealing with nonlinear differential equations, the symmetries of the source do not necessarily carry over to the metric solution in GR. Nonetheless, the converse is not true: symmetries of metric are necessarily symmetries of the source. Therefore, if the two conditions $R_{va}=R_{\phi a}=0$ are met, the metric has six Killing vectors discussed above and hence, as we will explicitly see, the dilaton and gauge fields also exhibit the same symmetries.
\item[IX.] One may compute the Ricci curvature of metric \eqref{NHEVH-ansatz-after-AB-conditions} {using a null-orthonormal frame for the near-horizon metric $\mathrm{e^A}$, where $A=+,-,\phi, a $ and
\be
\mathrm{e}^+=e^{-K} dv,\quad  \mathrm{e}^-=e^{-K} (d\rho+\frac{1}{2}A_0\rho^2 dv ),\quad  \mathrm{e}^{\phi}=e^{-K}\rho d\phi,\quad
\mathrm{e}^a=\hat{\mathrm{e}}^a\ee
where $\hat{\mathrm{e}}^a$ are vielbeins for the horizon metric $\gamma_{ab}$ and the entire space-time metric is $g=\eta_{AB}\mathrm{e}^A\mathrm{e}^B=2\mathrm{e}^+\mathrm{e}^-+{e^{\phi}e^{\phi}}+\mathrm{e}^a \mathrm{e}^a $. The connection 1-forms which are defined by $d\mathrm{e}^A+\omega^{A}_{\;\;B}\wedge \mathrm{e}^B=0$ turn out to be
\bea
&& \omega_{+-}=A_0\rho e^K \mathrm{e}^+,\quad \omega_{+\phi}=A_0\rho e^K \mathrm{e}^{\phi},\qquad \omega_{+a}=-\nabla_a K \;\mathrm{e}^- \nonumber \\
&& \omega_{- \phi}=-\frac{e^K}{\rho}\mathrm{e}^{\phi},\quad\quad \omega_{-a}=-\nabla_aK\; \mathrm{e}^+,\quad\;\; \omega_{\phi a}=-\nabla_a K\; \mathrm{e}^{\phi},
\eea
and $\omega_{ab}=\hat{\omega}_{ab}$. The curvature two-form  and  the Riemann
tensor are defined as $\Omega_{AB} = d\omega_{AB}+\omega_{AC}\wedge \omega^C_{\;\;B}$  and $\Omega_{AB}=\frac{1}{2}R_{ABCD} \; {\mathrm e}^C\wedge {\mathrm e}^D$.  This can be used to evaluate Ricci tensor components:}
\be\label{Ricci-ab}
R_{ab}=\hat{R}_{ab}-3 \nabla_a K \nabla_{b}K +3 \nabla_{a}\nabla_b K,
\ee
where $\hat R_{ab}$ is the Ricci curvature of metric $\gamma_{ab}$ and
\be
\label{Ricci-3d}
R_{\mu\nu}= \left[e^{-2K}\left(\nabla^2K-3 (\nabla K)^2\right) +2 A_0 \right] \tilde{g}_{\mu\nu}
\ee

where $\tilde{g}$ is metric on the 3d part ($A_0\rho^2 dv^{2} +2 dv d\rho + \rho^2 d\phi^{2}$). The scalar curvature is given by
\be\label{Ricci-scalar}
R=\hat{R} +6A_0 e^{2K} + 6 \nabla^2K -12(\nabla K)^2
\ee
where $\hat{R}$ is the scalar curvature of ${\gamma}_{ab}$.

\ei

\subsection{Implications of strong energy condition for  NHEVH solution}\label{SEC-subsection}
We showed in the previous section, based on smoothness of $T_{\mu\nu}$ and assuming vanishing of some components of energy-momentum tensor, that the near-horizon of an EVH black hole has a 3d maximally symmetric subspace. Our previous analysis, however, did not specify whether this 3d part is AdS$_3$, $R^3$ or dS$_3$. In the following theorem we state implications of
Strong Energy Condition (SEC) on the curvature of the 3d part.
\begin{theorem}\label{theorem2}
Strong energy condition implies the 3d part of near-horizon of an EVH black hole with non-positive cosmological constant  $\Lambda \leq 0$ is either AdS$_3$ or flat. The flat case can only occur for $\Lambda=0$ and the geometry is a direct product of $R^3$ and a $d\!-\!3$ dimensional space of finite volume.
\end{theorem}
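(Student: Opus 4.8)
The plan is to feed the strong energy condition into the trace-reversed form of the Einstein equations \eqref{Ein-Eq}, evaluate it on the near-horizon metric \eqref{NHEVH-ansatz-after-AB-conditions} (whose form is guaranteed by Theorem \ref{theorem1}), contract with a timelike vector confined to the $(\rho,v,\phi)$ plane, and finally integrate the resulting pointwise inequality over the transverse space $\gamma_{d-3}$. First I would rewrite the field equations as $R_{\mu\nu}=T_{\mu\nu}-\tfrac{1}{d-2}Tg_{\mu\nu}+\tfrac{2\Lambda}{d-2}g_{\mu\nu}$ in units $8\pi G_N=1$. The SEC states $\big(T_{\mu\nu}-\tfrac{1}{d-2}Tg_{\mu\nu}\big)\xi^\mu\xi^\nu\geq 0$ for every timelike $\xi$, so
\[
R_{\mu\nu}\,\xi^\mu\xi^\nu\ \geq\ \frac{2\Lambda}{d-2}\,(\xi\cdot\xi),
\]
and for $\Lambda\leq 0$ with $\xi$ timelike the right-hand side is already non-negative.

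Next I would choose $\xi=\partial_v-\partial_\rho$. On \eqref{NHEVH-ansatz-after-AB-conditions} one finds $\xi\cdot\xi=e^{-2K}(A_0\rho^2-2)$, which is negative near $\rho=0$ for every $x^a$ and whatever the sign of $A_0$, so there $\xi$ is timelike. Since \eqref{NHEVH-ansatz-after-AB-conditions} is a warped product over $(\gamma_{d-3},\gamma_{ab})$, the mixed components $R_{\mu a}$ vanish, and only the $3d$ block of \eqref{Ricci-3d} contributes:
\[
R_{\mu\nu}\,\xi^\mu\xi^\nu=\Big[e^{-2K}\big(\nabla^2K-3(\nabla K)^2\big)+2A_0\Big]\,(A_0\rho^2-2).
\]
Inserting this into the SEC inequality, dividing by the negative factor $(A_0\rho^2-2)$ and multiplying through by $e^{2K}>0$, all dependence on $\rho$ and on the overall warp cancels, leaving a bound valid at every point of $\gamma_{d-3}$:
\[
\nabla^2K-3(\nabla K)^2+2A_0e^{2K}\ \leq\ \frac{2\Lambda}{d-2}.
\]

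Then I would multiply by $e^{-3K}>0$ and use the identity $e^{-3K}\big(\nabla^2K-3(\nabla K)^2\big)=\nabla_a\!\big(e^{-3K}\nabla^aK\big)=-\tfrac13\nabla^2\!\big(e^{-3K}\big)$, giving
\[
\nabla_a\!\big(e^{-3K}\nabla^aK\big)+2A_0\,e^{-K}-\frac{2\Lambda}{d-2}\,e^{-3K}\ \leq\ 0 .
\]
Integrating over $\gamma_{d-3}$, which is closed with finite volume (comment III), kills the divergence and yields $A_0\int_{\gamma_{d-3}}e^{-K}\,dV_\gamma\leq \tfrac{\Lambda}{d-2}\int_{\gamma_{d-3}}e^{-3K}\,dV_\gamma$; since both integrals are strictly positive and $\Lambda\leq 0$, this forces $A_0\leq 0$, so the $3d$ part is AdS$_3$ or flat, and $A_0=0$ can hold only when $\Lambda=0$. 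In that case the pointwise bound becomes $\nabla^2\!\big(e^{-3K}\big)\geq 0$, i.e. $e^{-3K}$ is subharmonic on the closed manifold $\gamma_{d-3}$ and hence constant; thus $K$ is constant and \eqref{NHEVH-ansatz-after-AB-conditions} reduces to the metric product of a (locally) flat $3d$ space with $(\gamma_{d-3},\gamma_{ab})$, of finite volume.

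The hard part, as usual for such global arguments, is the regularity of $K$: by comment IV the warp $e^{-2K}$ may vanish at isolated points, where $K\to+\infty$ and the geometry degenerates. I must verify that $e^{-3K}=(e^{-2K})^{3/2}$ still extends as a $C^2$ function across those points — which it does, since a smooth non-negative $e^{-2K}$ vanishes to even order and then $(e^{-2K})^{3/2}$ has vanishing Hessian at the zero — so that the divergence theorem and the maximum principle on $\gamma_{d-3}$ apply with no boundary terms; alternatively one works on the open set $\{e^{-2K}>0\}$ and argues the excised points are negligible. A secondary point is to ensure $\gamma_{d-3}$ is boundaryless, which is where the EVH assumption of a (vanishing but) well-defined compact horizon section, comment III, is used.
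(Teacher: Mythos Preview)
Your argument is correct and follows essentially the same route as the paper: derive the pointwise inequality $\nabla^2K-3(\nabla K)^2-\tfrac{2\Lambda}{d-2}+2A_0e^{2K}\leq 0$ from the SEC applied to a timelike vector in the $(\rho,v,\phi)$ block, multiply by a weight $e^{-\alpha K}$, and integrate over $\gamma_{d-3}$ so the derivative term becomes a discardable divergence. The only noteworthy difference is that the paper keeps the exponent $\alpha\geq 3$ free, so after integration by parts the residual $(\alpha-3)(\nabla K)^2$ term directly forces $\nabla K=0$ in the borderline $A_0=0$, $\Lambda=0$ case; you instead fix $\alpha=3$, obtain $A_0\leq 0$ (with equality only if $\Lambda=0$), and then return to the pointwise bound to run a maximum-principle argument on $e^{-3K}$ for the flat case. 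Both routes reach the same conclusion; yours is arguably cleaner for the flat case, while the paper's $\alpha>3$ trick avoids invoking subharmonicity separately. Your explicit choice $\xi=\partial_v-\partial_\rho$ and your discussion of the regularity of $e^{-3K}$ at the zeros of the warp factor are more detailed than what the paper provides (the paper relegates the boundary-term issue to a footnote), but do not change the substance of the argument.
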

\begin{proof}
We start with a generic theory independent analysis of the equations of motion. To this end, let us recall that $d$ dimensional Einstein equations \eqref{Ein-Eq} may be written as
\be\label{Ricci-eq}
R_{\mu\nu}-\frac{2\Lambda}{d-2}g_{\mu \nu}=T_{\mu\nu}-\frac{1}{d-2} T g_{\mu\nu}\,.
\ee
Next, we note the SEC implies
\be\label{SEC}
(T_{\mu\nu}-\frac{1}{d-2} T g_{\mu\nu})t^\mu t^\nu \geq 0
\ee
for any (future-oriented) time-like vector field $t^\mu$.

Assuming SEC, we then need $(R_{\mu\nu} -\frac{2\Lambda}{d-2}g_{\mu \nu})t^\mu t^\nu\geq 0$. Recalling \eqref{Ricci-3d},  SEC implies
\footnote{Note that \eqref{SEC-I} is a necessary, but not necessarily sufficient, condition for SEC.}
\be\label{SEC-I}
\nabla^2K-3(\nabla K)^2 -\frac{2\Lambda}{d-2}+2A_0 e^{2K} \leq 0\,.
\ee
Now, consider the integral below
\[
\int_{\gamma_{d-3}} d^{d-3}x \sqrt{\det\gamma}\ e^{-\alpha K} \left(\nabla^2K-3(\nabla K)^2-\frac{2\Lambda}{d-2} +2A_0e^{2K}\right) \leq 0
\]
for an arbitrary real, $\alpha\geq 3$. Integrating by-part we obtain \footnote{Here we have used the fact that although the $d\!-\!3$ space with  $\gamma_{ab}$ can have curvature singularity, it has a finite volume and in fact $\int \sqrt{\det\gamma} e^{-c K}$ for any $c\geq 0$ is finite and positive. We have also used the point that one may drop the ``surface integral'' $\int_{\gamma_{d-3}} d^{d-3}x \nabla_a(\sqrt{\det\gamma}  \gamma^{ab} \nabla_be^{-\beta K})$ for $\beta\geq 1$. This latter is based on the fact that $\gamma_{d-3}$ is a finite volume space and non-compactness (punctures) it might have is coming at points $e^{-K}$ may vanish. }
\be
\int_{\gamma_{d-3}} d^{d-3}x \sqrt{\det\gamma}\ e^{-\alpha K} \left[(\alpha-3)(\nabla K)^2 -\frac{2\Lambda}{d-2}+2A_0e^{2K}\right] \leq 0
\ee
Therefore, SEC implies
\be  A_0<0,   \quad \text{unless when} \ \nabla_a K=0\ \text{and}\ \Lambda=0,\ \text{where }\  A_0=0\ \text{is also possible}.
\ee
That is, if $\nabla_a K\neq 0$, for $\Lambda\leq 0$, we will always get a space which is a warped product of an AdS$_3$ with $\gamma_{d-3}$. If $\nabla_a K=0$ (i.e. a constant warp factor) then we have the option of getting 3d flat space $A_0=0$, only if $\Lambda\geq 0$. Therefore, within the assumptions of our theorem $A_0=0$ may only be allowed for $\Lambda=0$ case. For generic $\Lambda>0$ the above analysis does not yield a restriction on the sign of $A_0$.
\end{proof}

\section{Near horizon limit and {matter fields}}\label{NHEVH-Matter-field-section} 

So far we did not make any assumption about the matter content of the theory to which our NHEVH geometry is a solution. In this section we study behavior of other fields in the near-horizon EVH limit.  As we saw in the previous section, requiring the metric and other fields (physical observables) at the horizon to be smooth and to admit smooth $r\to 0$ limit, imposes strong conditions on the form of the metric. In this section we explore implications of similar requirements on the other  fields in the problem.

We consider Einstein-Maxwell-scalar-$\Lambda$ theory in generic dimension. The scalar sector may be dilaton fields (with a shift symmetry) or   scalars with a potential. The gauge field part in odd dimensions may also include a Chern-Simons term. Our analysis is hence quite generic and includes the bosonic part of all gauged or ungauged supergravities with $U(1)$ gauge symmetry.   The black holes we consider are generically  solutions to Einstein-Maxwell-Dilaton-$\Lambda$ (EMD-$\Lambda$)  theory or the gauged supergravity theories, where we have some Maxwell fields coupled to scalars with potential terms. Let us consider the generic  action of the form
\be\label{action-generic}
\mathcal{\mathbf{L}}=\frac{1}{2} (R-2\Lambda)-\frac12 g^{\mu\nu}\mathcal{G}_{IJ}(\Phi)\partial_\mu\Phi^I\partial_\nu\Phi^J-V(\Phi^I)-\mathrm{e}^{c^p_I\Phi_I} F_{\mu\nu}^{p}F_{\alpha\beta}^{p} g^{\mu\alpha}g^{\nu\beta}+ \mathcal{L}_{CS}
\ee
where $\mathcal{G}_{IJ}$ is metric on the space of dilaton/scalar  fields (which is taken to be positive definite), $V(\Phi)$ is the potential for scalar fields, $F^p_{\mu\nu}=\partial_{[\mu}A_{\nu]}^p$ denotes the field strength of gauge fields $A^p$, and $\mathcal{L}_{CS}$ denotes a possible Chern-Simons term which may exist in odd dimensions.

Let us start with the contribution of the cosmological constant term to the energy-momentum tensor:
\be\label{T-Lambda}
T^\Lambda_{\mu\nu}=-\Lambda g_{\mu\nu}\,,
\ee
and hence for the metric ansatz \eqref{NHEVH}, $T^\Lambda_{rr}=T^\Lambda_{ra}=0$.

\subsection{Dilaton/scalar fields}

An EVH black hole solution to \eqref{action-generic} involves scalar fields where $\Phi_I=\Phi_I(r;x)$,
where we already used invariance under $\partial_v,\ \partial_\phi$ diffeomorphisms, ${\cal L}_v\Phi_I= {\cal L}_\phi\Phi_I=0$. Upon taking the near horizon limit \eqref{rigidscaling}, we learn that for NHEVH solution ansatz
\be\label{Phi-ansatz}
\Phi_I=\Phi_I(x^a)\ .
\ee
It is clearly seen that the above form for the scalar fields, being independent of $r,v,\phi$, is explicitly invariant under the six Killing isometries of the 3d part.

The equation of motion and the energy-momentum tensor for the scalar fields are
\bea
\frac{1}{\sqrt{-{g}}}\partial_\mu(\sqrt{-{g}}g^{\mu\nu} \mathcal{G}_{IJ}(\Phi)\partial_\nu\Phi^J)&=&\frac12 g^{\mu\nu}\frac{\partial\mathcal{G}_{JK}}{\partial\Phi_I} \partial_\mu\Phi^J\partial_\nu\Phi^K+\frac{\partial V}{\partial\Phi_I}+c^a_I\mathrm{e}^{c^a_I\Phi_I} F_{\mu\nu}^{a}F_{\mu\nu}^{a} \label{Dilaton-eom} \\
T_{\mu\nu}^{\Phi}=\mathcal{G}_{IJ}(\Phi)\partial_\mu\Phi^I\partial_\nu\Phi^J\!\!&-&\!\!\frac12 g_{\mu\nu}\ \mathcal{G}_{IJ}(\Phi)g^{\alpha\beta}\partial_\alpha\Phi^I\partial_\beta\Phi^J- V(\Phi) g_{\mu\nu} \label{Dilaton-T}
\eea
One can readily see that with \eqref{Phi-ansatz}, the $T^\Phi_{\mu\nu}$ components with $\mu$ or $\nu=r,v,\phi$ are proportional to the metric $g_{\mu\nu}$. In particular, one can also see that
\be\label{T-Phi-compts}
T^\Phi_{rr}=0\,,\quad T^\Phi_{ra}=0\,,\qquad T^\Phi_{va}\propto f_a\,,\quad T^\Phi_{\phi a}\propto g_a,
\ee
where $g_a$ and $f_a$ are respectively related to the off-diagonal $g_{\phi a}$ and $g_{v a}$ components of the metric \eqref{NHEVH}. Therefore, a generic scalar theory in the near-horizon EVH geometry satisfies the conditions of Lemmas 1,2 and 3, and hence Theorem 1 of previous section.

{\subsection{Gauge fields}}

Requiring the gauge field one-form $A^p$, (1) to be $\partial_v,\ \partial_\phi$ invariant, i.e. ${\cal L}_v A^p= {\cal L}_\phi A^p=0$ and, (2)  to have a well-defined near horizon limit \eqref{rigidscaling}, one learns that 
the most general near-horizon gauge field ansatz takes the form
\be\label{gauge-field-ansatz}
A^p= r e^p(x) dv+\frac{1}{r} h^p(x) dr+ r b^p(x) d\phi+ A_a^p(x) dx^a \,.
\ee
The gauge field energy-momentum tensor is
\bea
T_{\mu\nu}^{A}=2\mathrm{e}^{c^p_I\Phi_I} \left(F_{\mu\alpha}^{p}F^p_{\nu\beta} g^{\alpha\beta}-\frac14  g_{\mu\nu} (F^p_{\alpha\beta})^2\right).\label{gauge-field-T}
\eea
Requiring smoothness of $T^A_{\mu\nu}$ components, we can restrict the gauge field ansatz more. In particular, using \eqref{gauge-field-ansatz} and metric given by \eqref{NHEVH}, we have
\bea
T_{rr}^A= (b^p)^2 ~g^{\phi \phi}+2b^p\frac{\partial_a h^p}{r}~g^{\phi a}+\frac{\partial_a h^p\partial_b h^p}{r^2}~g^{ab} \,,
\eea
which is the norm of the spacelike  vector $(0,0,b^p,\frac{\partial_a h^p}{r})$. As implied by Lemma 1, $T^A_{rr}$ should vanish at $r=0$ and hence $b^p(x)=0$ and $\partial_a h^p=0$.\footnote{Note that $g^{\phi \phi} \propto 1/r^2$ and $g^{\phi a} \propto 1/r$.} Having these two conditions, one can readily see that $T_{ra}^A$ vanishes too. Therefore, the Lemma 1 of the previous section is explicitly verified for Maxwell gauge fields.
Gauge field ansatz, up to possible gauge transformations, becomes
\bea\label{gauge-field-ansatz-2}
A^p= r e^p(x) dv+ A_a^p(x) dx^a\,.
\eea

The gauge field equations of motion are
\bea
\frac{1}{\sqrt{-{g}}}\partial_\mu(\sqrt{-{g}}g^{\mu\alpha}\ \mathrm{e}^{c^p_I\Phi_I} F_{\alpha\nu}^{p})+ \frac{\delta\mathcal{L}_{CS}}{\delta A_\nu^p}=0.\label{gauge-field-eom}
\eea
$\nu=r$ component of above equations leads to $e^p(x)=0$.
 After setting $e^p$ to zero in \eqref{gauge-field-ansatz-2}, one finds
\footnote{Note that in the Gaussian null coordinates \eqref{NHEVH} $rr$ and $ra$ components of metric are zero.}
\be\label{T-A-compts}
T^A_{rr}=0\,,\quad T^A_{ra}=0\,,\qquad T^A_{va}\propto f_a\,,\quad T^A_{\phi a}\propto g_a,
\ee
where $g_a$ and $f_a$ are respectively related to the off-diagonal $g_{\phi a}$ and $g_{v a}$ components of the metric \eqref{NHEVH}. Therefore, the Maxwell gauge fields in the near-horizon EVH limit also satisfy the conditions of Lemmas 1,2,3 and hence Theorem 1. follows for them.

It is worth noting that we did not employ  scalar field equation of motion to show it is independent of $r,v$ and $\phi$ but for gauge field, equations of motion are needed. Instead of equations of motion, however, we could also use invariance of  gauge fields under Killing vectors of metric \eqref{NHEVH}. All in all, for the most general NHEVH gauge field ansatz, $e^p(x), b^p(x)$ and $\partial_a h^p$ in \eqref{gauge-field-ansatz} should  vanish and hence we remain with
\be\label{gauge-field-simplified}
A^p= A^p_a(x^b) dx^a,\qquad F^p=F^p_{ab}dx^a\wedge dx^b\,.
\ee
One can readily check that the gauge field given above is compatible with the gauge field equations of motion \eqref{gauge-field-eom}, with the metric \eqref{NHEVH-ansatz-after-AB-conditions} and dilaton \eqref{Phi-ansatz}.

In summary, the most general ansatz for the NHEVH solutions of the theory \eqref{action-generic} in any dimension is given through \eqref{NHEVH-ansatz-after-AB-conditions}, \eqref{Phi-ansatz} and \eqref{gauge-field-simplified}.

\subsection{More on Einstein equations}
To fully specify the solution we need to impose and solve the rest of equations of motion. This involves scalar and gauge field equations of motion \eqref{Dilaton-eom} and \eqref{gauge-field-eom} and the Einstein equations:
\be
\hspace*{-3mm}R_{\mu\nu}=\frac{2}{d-2} \left(\Lambda+ V(\Phi)-\frac12\mathrm{e}^{c^p_I\Phi_I}(F^p)^2\right)g_{\mu\nu}+ \mathcal{G}_{IJ}(\Phi)\partial_\mu\Phi^I\partial_\nu\Phi^J+2\mathrm{e}^{c^p_I\Phi_I}F_{\mu\alpha}^{p}F^p_{\nu\beta} g^{\alpha\beta},
\ee
where we used \eqref{Dilaton-T} and \eqref{gauge-field-T}. For the dilaton and gauge field ansatz given in previous sections, we find
\be
\hspace*{-6mm}R_{\mu\nu}=\left\{\begin{array}{cc}
\hspace*{-65mm}\frac{2}{d-2}[\Lambda+ V(\Phi) -\frac12\mathrm{e}^{c^p_I\Phi_I} (F^p)^2]g_{\mu\nu},\quad & \mu,\nu=\rho,v,\phi\cr \ \ \ &\cr
\frac{2}{d-2}[\Lambda+ V(\Phi) -\frac12\mathrm{e}^{c^p_I\Phi_I} (F^p)^2 ]\gamma_{ab}+ \mathcal{G}_{IJ}(\Phi)\partial_a\Phi^I\partial_b\Phi^J+2\mathrm{e}^{c^p_I\Phi_I}F_{ac}^{p}F^p_{bd}\gamma^{cd},\ \ & \mu,\nu=a,b
\end{array}\right.
\ee
where $(F^p)^2=F_{ac}^{p}F^p_{bd}\gamma^{cd}\gamma^{ab}\geq0$.
As expected and argued in the previous section, the above is of course compatible with the 3d part being a maximally symmetric space. In this work we do not intend to make classification of the NHEVH solutions. This question has been dealt with  in \cite{EVH/CFT} in four dimensions and partially in \cite{SO22} in five dimensions.

We also comment that, as it is readily seen, the theory \eqref{action-generic} satisfies the strong energy condition \eqref{SEC} for theories with $V(\Phi)\leq 0$. This is of course compatible with the well known fact about the ungauged and gauged supergravity theories, that they satisfy the strong energy condition.
For these cases our Theorem 2 of section \ref{SEC-subsection} is applicable; in this class of theories if $V(\Phi)+ \Lambda\leq 0$, the 3d part is either AdS$_3$ or 3d flat space. Next, we explore the 3d flat case (which is a much more restrictive case).

\paragraph{On the existence of  $\nabla_a K=0$ solutions.} The 3d part of equations of motion for this case implies that
\be
\frac12\mathrm{e}^{c^p_I\Phi_I} (F^p)^2=\Lambda+V(\Phi)\,.
\ee
The above has solutions only when $\Lambda+V(\Phi)\geq 0$, and for $\Lambda+V(\Phi)=0$ case we are forced to turn off the gauge fields. The $d-3$ dimensional part of the equations of motion,
\be
R_{ab}=\mathcal{G}_{IJ}(\Phi)\partial_a\Phi^I\partial_b\Phi^J+2\mathrm{e}^{c^p_I\Phi_I}F_{ac}^{p}F^p_{bd}\gamma^{cd},
\ee
then implies the $d\!-\!3$ dimensional part of metric should have a positive definite Ricci scalar. For $\Lambda+V(\Phi)=0$ case, $F_{ab}=0$ and hence the solution may only exist if $R_{ab}=\mathcal{G}_{IJ}(\Phi)\partial_a\Phi^I\partial_b\Phi^J$ and $\nabla^2 \Phi_I=\partial_IV(\Phi)$ have simultaneous solutions.

An interesting case is Einstein vacuum solutions, when gauge field and dilatons are turned off. This implies $\Lambda+V(\Phi)$ should  be necessarily zero and the $d-3$ dimensional part should be Ricci flat. $T_{\mu\nu}=0$ obviously satisfies strong energy condition. For $d=4,5,6$, $d-3$ dimensional  Ricci flatness implies vanishing of the Riemann curvature and hence the whole space is (locally) flat and  there are no nontrivial solution for $d\leq 6$. For $d\geq7$ we have other options, as Ricci flatness in $d-3\geq 4$ does not imply vanishing curvature. For $d\geq 7$ vacuum solutions with $A_0=0$ can hence be classified through Euclidean, compact Ricci flat $d-3$ dimensional geometries. We, however, note that such geometries may not be related to any (EVH) black hole solution in the near horizon limit.

\paragraph{More on $A_0\leq 0$, AdS$_3$ case.} This case is more generic as generic field configurations in our theory satisfy SEC and $\nabla_aK\neq0$. We will review several examples  in this class in section \ref{examples-section}. These solutions generically come from known EVH black hole solutions in the near horizon limit. It is worth noting that, since in our near-horizon limit \eqref{rigidscaling} we also scale $\phi$, if viewed as near-horizon limit of EVH black hole solutions, the AdS$_3$ factors are ``pinching AdS$_3$'' geometries.

\section{Near-horizon near-EVH geometries}\label{near-EVH-section}

So far, using Einstein field equations in the presence of non-positive cosmological constant, we have proved that for an EVH black hole the near-horizon geometry has an AdS$_3$ throat or a 3d flat spacetime. However, in section \ref{NHEVH-ansatz-section} and in particular in \eqref{genericNH}, we discussed the most general form of near-horizon ``near-EVH'' geometry which has several extra functions compared to the EVH case. In this section we show that similar analysis as made in  section \ref{NHEVH-EOM-section}, leads to the fact that in cases with AdS$_3$ throat the near-horizon near-EVH geometry should be the same geometry as in the EVH case, with the AdS$_3$ part replaced with a BTZ black hole. For the cases where the 3d part is (locally) flat we show that the 3d part of the near-EVH geometry corresponds to a particle of a given mass and angular momentum.
{\begin{theorem}\label{theorem3}
The 3d part of near horizon of a near-EVH black hole with non-positive cosmological constant is either a (pinching) BTZ black hole or a rotating massive particle on the flat spacetime.
\end{theorem}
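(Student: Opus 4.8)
The plan is to rerun the analysis of Sections \ref{NHEVH-EOM-section} and \ref{NHEVH-Matter-field-section}, now keeping $\epsilon/\lambda$ finite, i.e.\ starting from the full near-horizon near-EVH metric \eqref{genericNH} with its extra functions $F^{(1)},H^{(1)},G^{(1)},G^{(2)},g^{(1)}_a$. The first point is that Lemma \ref{lemma1} is insensitive to the EVH limit: its proof used only the near-horizon scaling \eqref{rigidscaling} together with analyticity and finiteness of $T_{\mu\nu}$ at $r=0$, none of which involves $\epsilon$. Hence $T_{rr}=T_{ra}=0$, and since $g_{rr}=g_{ra}=0$ in Gaussian null coordinates this gives $R_{rr}=R_{ra}=0$ on the near-EVH ansatz as well.

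First I would impose $R_{rr}=R_{ra}=0$ on \eqref{genericNH}. The $d\phi\,dx^a$ cross term now has coefficient $w_a=\tfrac{\epsilon}{\lambda}g^{(1)}_a+r g_a$ and $\det g=-\det\gamma\,(g_{\phi\phi}-w_a\gamma^{ab}w_b)$; as in Lemma \ref{lemma2}, positive-definiteness of $\gamma_{ab}$ should force $g_a=g^{(1)}_a=0$ and $f_a=2\partial_a K$, after which the shift $\rho=r e^{2K}$ removes the $dv\,dx^a$ terms. None of the scalar and gauge-field manipulations of Section \ref{NHEVH-Matter-field-section} used $\epsilon=0$, so they carry over: for Einstein--Maxwell--scalar theories one still has $\Phi_I=\Phi_I(x)$, $A^p=A^p_a(x)dx^a$, and \eqref{Dilaton-T}, \eqref{gauge-field-T} still give $T_{va}\propto g_{va}$, $T_{\phi a}\propto g_{\phi a}$ and $T_{\mu\nu}\propto g_{\mu\nu}$ for $\mu,\nu\in\{\rho,v,\phi\}$. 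With $g_{va}=g_{\phi a}=0$ the first two vanish, so $R_{va}=R_{\phi a}=0$; exactly as in Lemma \ref{lemma3} this forces the remaining (warp-factor--stripped) constants built from $F^{(1)},H^{(1)},G^{(1)},G^{(2)}$ to be $x$-independent, and then $R_{\mu a}=0$ together with $R_{\mu\nu}\propto g_{\mu\nu}$ on the three-dimensional block makes that block a 3d Einstein metric, hence maximally symmetric with constant Ricci scalar $6A_0$. The strong-energy argument of Theorem \ref{theorem2} is unchanged (it used only the form \eqref{Ricci-3d} and the finiteness of the volume of $\gamma_{d-3}$), so $A_0\le 0$ whenever $\Lambda\le 0$.

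It then remains to identify the constant-curvature 3-metric. For $A_0<0$ it is locally AdS$_3$, and it carries the commuting Killing fields $\partial_v,\partial_\phi$ (with $\partial_\phi$ spacelike) and inherits from \eqref{rigidscaling} the periodic identification $\phi\sim\phi+2\pi\lambda$; by the classification of such quotients of AdS$_3$ it is then global AdS$_3$, a conical defect, or a BTZ, and smoothness together with positivity of $T$ and of the small horizon area along the near-EVH branch selects a \emph{pinching} BTZ. I would make this explicit by completing the square in the radial coordinate and by shifting and rescaling $v$ and $\phi$, which also yields the dictionary between $(\ell,M,J)$ and the surviving near-EVH constants, with $\ell^{-2}=-A_0$. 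For $A_0=0$ the same reasoning with AdS$_3$ replaced by flat $\mathbb{R}^{2,1}$ gives a locally flat 3-metric with a conical deficit (inherited from the pinching cycle) plus a rotational $dv\,d\phi$ identification, i.e.\ the geometry of a spinning massive point particle, which is the $A_0\to 0$ degeneration of the BTZ family.

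The step I expect to be the main obstacle is pinning down this last identification: carrying out the explicit coordinate change and the precise map from the Einstein-equation constants to $(\ell,M,J)$, in particular checking that the surviving family is \emph{exactly} the two-parameter BTZ --- no spurious extra modulus escapes the reduction --- and that along the physical near-EVH branch the parameters fall in the black-hole rather than the naked-defect range. A secondary subtlety is verifying that the $rr$- and $ra$-equations still force $g_a=g^{(1)}_a=0$ in the presence of the new $\epsilon/\lambda$-terms in $w_a$ and $\det g$, which should follow from positive-definiteness of $\gamma_{ab}$ just as in Lemma \ref{lemma2}.
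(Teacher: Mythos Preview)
Your overall strategy matches the paper's, but there is a genuine gap at exactly the point you flag as a ``secondary subtlety,'' and your expectation there is wrong.

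You claim that $R_{rr}=R_{ra}=0$ on the near-EVH ansatz \eqref{genericNH} should, by positive-definiteness of $\gamma_{ab}$, kill both $g_a$ and $g^{(1)}_a$, just as in Lemma~\ref{lemma2}. It does not. What actually happens (and what the paper does) is a two-step argument: at leading order in $\alpha=\epsilon/\lambda$ the equations reproduce the EVH analysis and give $g_a=0$, $f_a=\partial_a G/G$, $H=\tilde H G$, $F=\tilde F G$; but once you write the resulting near-EVH metric \eqref{NH-Near-EVH-simplified} with the surviving functions $H^{(1)},R,J,g^{(1)}_a$, the conditions $R_{rr}=R_{ra}=0$ only yield the algebraic relation
\[
J=\gamma^{ab}g^{(1)}_a g^{(1)}_b,
\]
not $g^{(1)}_a=0$. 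The reason the Lemma~\ref{lemma2} mechanism fails is that in the near-EVH ansatz the $d\phi^2$ coefficient already contains an $\alpha^2$ piece independent of $r$, so the quadratic form whose vanishing you would need is no longer $\gamma^{ab}w_aw_b$ alone but $\gamma^{ab}w_aw_b$ minus a strictly positive term; vanishing of $R_{rr}$ then fixes $J$ rather than forcing $w_a$ (hence $g^{(1)}_a$) to zero. The paper is explicit about this: it \emph{assumes} $g^{(1)}_a=0$ as an extra input (hypersurface-orthogonality of $\partial_\phi$ on the horizon, motivated by examples), and even stresses in the discussion section that this vanishing ``did not directly come out of the equations of motion.'' So your proof as written is incomplete at this step; you either need to add that assumption or supply an independent argument the paper does not have.

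A second, smaller point you should make explicit: after all the off-diagonal equations are used you are left with five constants $\tilde F,\tilde H,F^{(1)},H^{(1)},R$ (one removable by $\phi\to\phi+cv$). The condition $R_{\mu\nu}\propto g_{\mu\nu}$ on the 3d block is \emph{not} automatic; it imposes the two relations
\[
H^{(1)}=2\tilde H R,\qquad F^{(1)}=2\tilde F R,
\]
which are precisely what reduces the family to the two-parameter BTZ (or its flat limit). You anticipate this in your ``main obstacle'' paragraph, but it is worth saying outright rather than leaving it to the explicit coordinate change, since these relations are what guarantee no spurious modulus survives.
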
}
\begin{proof}
We note that both sides of the Einstein equations \eqref{Ein-Eq} may be expanded in powers of $\alpha\equiv\epsilon/ \lambda$, as in \eqref{double-expansion}. The leading order equations in powers of $\epsilon/\lambda$ does not involve near-EVH parameters and hence they lead to the previous results, i.e.
\bea
g_a=0 , ~ f_a=\frac{\partial_a G}{G}, ~ H=\tilde{H} G\ ~ F=\tilde{F} G\; ,
\eea
where $\tilde F, \tilde H$ are constant. Moreover, we know that $F^{(1)}$ is a non-negative constant recalling the fact that Hawking temperature is a constant over the horizon (\emph{cf}. discussions in section 2). Upon the coordinate transformation $r\to \rho=e^{2K} r$ as before, ($G\equiv e^{2K}$), we find the most general form of the near-EVH metric
\bea\label{NH-Near-EVH-simplified}
ds^{2}\hspace{-.3cm}&=&\hspace{-.3cm}e^{-2K}\hspace{-.1cm}\left[-\rho(\rho\tilde{F}+\alpha F^{(1)}) dv^{2}+ 2\rho(\tilde{H}\rho+\alpha H^{(1)}) dv d\phi+ [(\rho+\alpha R)^2+\alpha^2 J ]d\phi ^{2} +2 dv d\rho \right] \cr
&+&\hspace{-.3cm}2\alpha \ g^{(1)}_a dx^a d\phi  +\gamma_{ab} dx^{a}dx^{b}\;.
\eea
The above metric has $d$ more unknown functions $H^{(1)}, R, J$ and $g^{(1)}_a$ (and a constant $F^{(1)}$) compared to the EVH case \eqref{NHEVH-simplified}. Of course as expected in $\alpha=0$ we recover the NHEVH ansatz \eqref{NHEVH-simplified}.

Since $|\partial_\phi|^2>0$ and that determinant of metric has a definite sign,  $J \geq 0$ and  $(\rho+ \alpha R)^2+\alpha^2 J > \alpha^2\gamma^{ab}g^{(1)}_ag^{(1)}_b$ where $\gamma^{ab}$ is inverse of metric $\gamma_{ab}$.

As in the EVH case, to determine or restrict the other unknown functions in the near-EVH metric, we impose Einstein equations. These equations should be valid for any value of $\alpha$ parameter, as long as $\alpha\lesssim 1$. As argued in section 4, smoothness of energy-momentum tensor in the near-horizon limit implies $T_{rr}=T_{ra}=0$. These conditions remain true for near-EVH case and therefore we have
$R_{rr}=R_{ra}=0$. A direct computation of the Ricci reveals that
\be
R_{rr}=R_{ra}=0 \quad \Rightarrow\quad J=\gamma^{ab}g^{(1)}_ag^{(1)}_b\,.
\ee
With the above, \eqref{NH-Near-EVH-simplified} may be written as
\bea\label{NH-Near-EVH-simplified-II}
ds^{2} &=& e^{-2K} \left[-\rho(\rho\tilde{F}+\alpha F^{(1)}) dv^{2}+ 2\rho(\tilde{H}\rho+\alpha H^{(1)}) dv d\phi+ (\rho+ \alpha R)^2d\phi ^{2} +2 dv d\rho \right] \cr
&+&\gamma_{ab}(dx^{a}+\alpha{g^{(1)}}^a d\phi)(dx^{b}+\alpha{g^{(1)}}^b d\phi)\;,
\eea
where ${g^{(1)}}^a =\gamma^{ab}g^{(1)}_b$.

As in the EVH case,  $T_{va}=0$ implies $R_{va}=0$. Next, we note that in the EVH case, as can be directly seen  from \eqref{NHEVH}, $\partial_\phi$ is a Killing vector which is hypersurface orthogonal on the horizon of the original EVH black hole, i.e. at codimension two constant $v$ and $r=0$ surfaces, $\partial_\phi$ is transverse to the constant $\phi$ surfaces. Hereafter, we \emph{assume} that this property remains in the near-EVH case. This assumption, which is justified through many examples of EVH black holes in five or six dimensions, implies $g^{(1)}_a=0$. With this assumption,  $T_{\phi a}=0$ implies $R_{\phi a}=0$, and therefore,
\be
R_{va}=0,\ R_{\phi a}= 0\quad \Rightarrow \quad \tilde F,  H^{(1)},
{\tilde H,}\ R=\mathrm{const.}
\ee

With the above assumptions, we obtain a metric with five constants and the unknown functions $K$, $\gamma_{ab}$ to be determined by the remaining Einstein equations. One of the five constants may be removed by a coordinate transformation $\phi\to \phi+ c v$, with a constant $c$. We note that the 3d $\rho,v,\phi$ part is not a maximally symmetric space, unless the constants are related in a specific way. Such relations comes from components of the Einstein equations along the 3d part, especially recalling that for the class of theories discussed in section \ref{NHEVH-Matter-field-section} the energy-momentum tensor along the 3d part is proportional to its metric. Explicitly, we get maximally symmetric 3d space if
\be\label{flat-condition}
H^{(1)}=2\tilde H R\,,\qquad F^{(1)}=2\tilde F R.
\ee
With the above, the Ricci curvature of the near-EVH metric equals that of the EVH metric, and is given through \eqref{Ricci-3d} and \eqref{Ricci-ab}. As in the EVH case, if the matter fields satisfy strong energy condition, we deal with two options:
\begin{itemize}
\item If $A_0=-(\tilde F+\tilde H^2)=0$, we have a (locally) flat 3d space. After the shift $\rho\to \rho-\alpha R$ and $\phi\to\phi-\tilde Hv$, and then rescaling $v\to (\alpha \tilde H R)^{-1}v$, $\phi\to (\alpha \tilde H R)^{-1}\phi$ and $\rho\to (\alpha \tilde H R)\rho$ the geometry takes the form
\bea\label{NH-Near-EVH-flat-case}
ds^{2} = e^{-2K} \left[ dv^{2}+ \frac{2}{\tilde{H}}dv d\phi+ \rho^2 d\phi ^{2} +2 dv d\rho \right] +\gamma_{ab}dx^{a}dx^{b}\;,
\eea
where if the $\phi$ direction in the original near-EVH black hole had a $[0,2\pi]$ range, the $\phi$ coordinate in the above metric is ranging over $[0,2\pi \alpha\tilde H R\lambda]$. One can readily see from discussions of previous sections that \eqref{flat-condition} arises from the equations of motion in the EMD-$\Lambda$ theory. In fact, with \eqref{flat-condition}, the 3d part of metric  is locally flat and denotes a particle of a given mass and angular momentum proportional to $\tilde H$ \cite{DJT}.

\item If $A_0=-(\tilde F+\tilde H^2)<0$, then we have a locally AdS$_3$ space, with metric
\bea\label{NH-Near-EVH-AdS-case}
\hspace*{-16mm}ds^{2} = e^{-2K}\!\!\left[-\tilde F\rho (\rho+2\alpha  R) dv^{2}+ 2\tilde{H} \rho(\rho+2\alpha  R) dv d\phi+ (\rho+ \alpha R)^2d\phi ^{2} +2 dv d\rho \right]\!\!
+\gamma_{ab}dx^{a}dx^{b}.
\eea
The above denotes a (pinching) BTZ, recalling that $\phi\in [0,2\pi\lambda]$, with inner and outer horizon radii $r_\pm$ and AdS$_3$ radius $\ell$ (\emph{cf.} \eqref{BTZ-NGC})
\be
\ell^2=-\frac{1}{A_0}\,,\qquad r_+=\alpha R\,,\qquad \tilde H=\frac{r_-}{\ell r_+}\,.
\ee
\end{itemize}
\end{proof}
To summarize, the near-horizon near-EVH geometries \eqref{NH-Near-EVH-flat-case} and \eqref{NH-Near-EVH-AdS-case} are solutions to the same theories as the NHEVH geometry \eqref{NHEVH-ansatz-after-AB-conditions}. One may then relate the mass and angular momentum of the 3d part in either of the above geometries to the mass and angular momentum perturbations of the near-EVH black hole from the EVH point (before taking the near-horizon limit). In other words, if one views the near-EVH black holes as excitations around EVH black hole, then the information about these near-EVH black holes  appears as mass and angular momentum of the corresponding 3d geometries  after taking the near horizon limit. This point has been demonstrated through several examples of EVH black holes, some of which may be found in our reference list and needs  further study to which we hope to return in upcoming works.

\section{Examples of EVH black holes and their near horizon geometries}\label{examples-section}

There are several examples of EVH black holes in different dimensions.  In four dimensions it has been shown that EMD theory  admits EVH black hole solutions and their near horizon geometry always has an AdS$_3$ geometry  \cite{EVH/CFT}, in accord with our general theorems in this work. The effect of higher dimensional correction has been studied in \cite{Hossein-HD}.  Several examples of four and higher dimensional EVH black holes have been studied and  their near horizon geometry  analyzed and in all these examples appearance of (pinching) AdS$_3$ or (pinching) BTZ for the near-EVH case has been ubiquitous \cite{EVH-2,AdS4-EVH,Japanese-EVH, EVH-3, EVH-4, EVH-Ring, Hossein-4d, Hossein-5d}. In this section we briefly review some of these known solutions.

\subsection{EVH-BTZ black hole solutions in three-dimensions}
 Let us start with three dimensional Einstein theory with negative cosmological constant and the black hole solution there, the BTZ black hole \cite{BTZ}. To take the EVH limit, we first  write BTZ solution in the Gaussian null coordinate system:
 \be\label{BTZ-NGC}
 ds^2=-\frac{r(r+2r_+)(r_+^2-r_-^2)}{\ell^2 r_+^2} dv^2+2drdv-\frac{2r (r r_-+2r_+ r_-)}{\ell r_+}dv d\varphi +(r+r_+)^2d\varphi^2\;.
 \ee
The inner and outer horizons are located at  $r=-(r_+-r_-)$ and $r=0$, respectively. Hawking temperature and the entropy are
 \be
 T_{H}=\frac{r_+^2-r_-^2}{2\pi r_+ \ell^2},\qquad  S_{BH}=\frac{2\pi r_+}{4 G_3}
 \ee
where $G_3$ is Newton coupling constant in three dimensions and cosmological constant is $\Lambda =-6 \ell^{-2}$. If we define $r_{\pm}=\rho_{\pm} \epsilon$ and take $\epsilon\to 0$ limit, BTZ entropy and temperature vanish while their ratio remains finite and hence in this limit we are dealing with an EVH BTZ black hole \cite{BTZ-EVH}. The near-horizon limit involves scaling radial coordinate as $r=\lambda \rho$ and taking $\lambda\to 0$. The near-horizon EVH limit is obtained when  $\epsilon\ll \lambda \ll 1$:
\be
\label{NHEVH3d}
ds^2=-\left( \frac{\rho_+^2-\rho_-^2}{\ell^2\rho_+^2}\right) \rho^2 d\nu^2 +2d \rho d\nu-\frac{2\rho_-}{\ell\rho_+} \rho^2 d\nu d\chi+ \rho^2 d\chi^2,\quad v=\frac{\nu}{\lambda},\; \varphi=\frac{\chi}{\lambda},
\ee
which is a pinching AdS$_3$ with radius $\ell$.  The near horizon near-EVH limit is obtained when  $\lambda\simeq \epsilon$:
\be
\label{NHnearEVH3d}
ds^2=-\frac{\rho(\rho+2\rho_+)(\rho_+^2-\rho_-^2)}{\ell^2 \rho_+^2}d\nu^2+2d\rho d\nu -\frac{2\rho(\rho\rho_-+2\rho_+\rho_-)}{\ell \rho_+}d\nu d\chi +(\rho+\rho_+)^2 d\chi^2\;,
\ee
which is (pinching) BTZ black hole solution, written in Gaussian null coordinate system. We note that $\chi$ coordinate is pinching i.e $\chi \in [0,2\pi\lambda]$.
\subsection{Four  dimensional EVH black hole in EMD theory}
Near horizon structure of generic EVH black hole solution in EMD-theory in four dimensions has been studied in \cite{EVH/CFT},  for 4d heterotic theories in  \cite{Hossein-4d, Hossein-5d} and for U(1)$^4$ gauged supergravity theory in \cite{AdS4-EVH}.  It has been shown that any EVH black hole in this theory, if it exists, has the following near horizon geometry
\be\label{4d-EVH}
ds^2=R^2 |\sin\theta| \left(ds_{{\mathrm AdS}_3}^2+\frac{1}{4}d\theta^2\right)
\ee
where $R$ is a constant determined by charges carrying by the black hole and $ds_{{\mathrm AdS}_3}^2$ is a metric on the {\it pinching} AdS$_3$
space which can be written as (\ref{NHEVH3d}) in Gaussian null coordinate system. We note that there is no vacuum solution of the form  \eqref{4d-EVH} and one should have specific profile of the dilaton fields \cite{EVH/CFT,SO22}. It has also been  shown that this three dimensional part of the metric is replaced by {\it pinching} BTZ black hole solution
(\ref{NHnearEVH3d}) when we take near-horizon limit of a near-EVH black hole solution. These are all in accord with our general discussions of previous sections.
\subsection{Five dimensional EVH black holes }
Five dimensions is the lowest dimension which we can find EVH black hole in the vacuum Einstein theory. It has been shown in \cite{Bar-Horo, EVH-3} that EVH conditions are met for single spin extremal Myers-Perry black holes \cite{MP} or
for single spin extremal black rings \cite{EVH-Ring}. In the near-horizon limit the EVH hole and ring become identical and have the metric \cite{EVH-Ring}
\be
ds^2=R^2\cos^2\theta  ds_{{\mathrm AdS}_3}^2 + R^2 (\cos^2\theta^2+\tan^2\theta d\psi^2)
\ee
where $R$ is proportional to the (non-zero) angular momentum of the original black hole or black ring solution.

It is also worth mentioning the EVH black hole solutions in U(1)$^3$ five dimensional gauged supergravity
which is studied in \cite{EVH-3, EVH-4}. Their near horizon geometry is given by \cite{SO22}
\be
ds^2=H_{\theta}\Bigg[ \mathcal{R}^2   ds_3^2 +\frac{ a^2 }{\Delta_{\theta}} (d\theta ^2+  \frac{H_0^3}{H_{\theta}^3}\frac{\Delta_{\theta}^2}{\Delta_0^2}\; \sin^2\theta\cos^2\theta d\psi^2)\bigg],
\ee
with
\bea
\Delta_{\theta}=(1-\frac{a^2}{\ell^2}\cos^2{\theta})\,,\quad H_{i}={\cos^2{\theta}+s_i^2},\quad H_{\theta}=H_1^{\frac{1}{3}} H_2^{\frac{1}{3}} H_3^{\frac{1}{3}} ,\quad
\eea
where  constants $p_i$ and $R^2$ are related to $a$ and $s_i$ as
\bea
\mathcal{R}^2= \frac{a^2}{1+\frac{a^2}{\ell^2}(s_1^2+s_2^2+s_3^2+1)}, \quad p_i^2=\frac{2 a^4 s_i^2(s_i^2+1)(1+ \frac{a^2}{\ell^2}s_i^2)}{[1+\frac{a^2}{\ell^2}(1+s_1^2+s_2^2+s_3^2)]^3}.
\eea

In \cite{SO22} a classification of four and five dimensional solutions with local $SO(2,2)$ isometry was provided. These solutions, given our theorems in this work, would hence yield a classification of  the NHEVH solutions in these dimensions.

\section{Discussion and outlook}\label{discussion-section}
In this work we analyzed a particular class of solutions to generic Einstein gravity theories in diverse dimensions. We proved two theorems stating that near horizon limit of Extremal Vanishing Horizon (EVH) black holes generically contain an AdS$_3$ throat. Although our general theorems also allow for having a 3d flat space part in the geometry, in the previously studied examples of EVH black holes we do not have such solutions. In our analysis we assumed that the original EVH black hole solution exists but did not fully specified that solution. In our analysis (\emph{cf.} section \ref{EVH-ansatz-section}) we only used very general properties of such solutions.  It is likely that not having 3d flat cases in explicit examples is another general feature which bears upon more details of the theory in question. It would be interesting to explore this direction.

In our analysis in sections \ref{EVH-ansatz-section}, \ref{NHEVH-ansatz-section} we introduced two parameters $\epsilon, \lambda$; $\epsilon$ parameterizes ``near-EVH-ness'' while $\lambda$ measures how close to the horizon we are.
We only considered cases where we remained ``close to EVH'' while taking the near-horizon $\lambda\to 0$ limit and excluded $\epsilon/\lambda\gg 1$. In our analysis, imposing the EVH condition \eqref{EVH-def-1}, we tied together the way vanishing temperature (extremality) and vanishing horizon area happen, and measured both with the same parameter $\epsilon$. One may try to perform this analysis in a more general setup by disentangling these two; e.g. one may take $T\sim \epsilon\to 0$ while $A_h\sim \epsilon^k$ with $k\geq 0$. In this way we can distinguish three more cases: $k=0$ (and in fact for more general $0\leq k< 1$ case), we just recover the usual extremal black holes; for $k=1$ we obtain our EVH analysis and for $k>1$ we encounter new cases. For each of these cases, we still have the option to choose how $\epsilon$ and the near-horizon parameter $\lambda$ scale with respect to each other.
If we choose $0\leq k<1$ and $\epsilon/\lambda\ll 1$, it is straightforward to see that one  recovers  the usual results of near-horizon extremal geometries \cite{KL-review} and for the same values of $k$, if $\epsilon/\lambda\sim 1$, the AdS$_2$ factor in the near-horizon extremal geometry is replaced by an ``AdS$_2$ black hole'' (i.e. a geometry with metric $-(r^2-r_0^2)dt^2+\frac{dr^2}{r^2-r_0^2}$). Among interesting questions which may then arise in this setup is the orders of limits issue. For example,  one may start with a given $k<1$, take the near horizon limit and then take $k\to 1$ limit. That is, we first take the near-horizon extremal limit and then take the EVH limit.  It can be  checked (following our general analysis of sections \ref{EVH-ansatz-section}, \ref{NHEVH-ansatz-section}, or working through specific examples \cite{BTZ-EVH,EVH-3}) that the resulting geometry is not the same as what we have for the EVH or near-EVH case. In particular, we do not see a locally AdS$_3$ throat. Understanding this more general case with parameter $k$ involved (including $k>1$ case) is an interesting direction for future studies.

Given the generality of the theorems we proved, one may also push for  classification and uniqueness theorems  for the near-horizon EVH solutions. Some first steps in this direction has been taken in \cite{EVH/CFT,SO22}. This is another question we hope tackle in our future studies.

In this work we focused on addressing some issues regarding EVH black holes and their near horizon limit from in a GR solution viewpoint. As a completion of the current study one should also analyze various charges associated with different EVH solutions and explore the counterparts of Smarr relation or first law for this class of solutions. Similar steps for the extremal black holes were taken in \cite{First-law,NHEG-mechanics}.

In four dimensional EVH cases it was shown that the near horizon limit is indeed a decoupling limit \cite{EVH/CFT} and appearance of AdS$_3$ throat motivated proposing an EVH/CFT proposal, according which the near horizon EVH black hole is dual (in the sense of AdS/CFT) to a 2d CFT at the Brown-Henneaux central charge. As first  check for the EVH/CFT one may relate the near-EVH excitations, which appear as BTZ black holes in the near horizon limit, to thermal states in the dual CFT. It is desirable explore the EVH/CFT for specific examples and in particular understand the ``pinching'' feature. Some proposals for the latter was presented in \cite{BTZ-EVH}. As another technically important step in this direction is related to the near-EVH analysis of section \ref{near-EVH-section}. There we assumed the off-diagonal components of metric $g^{(1)}_a$ to be zero; their vanishing, while compatible with all equations of motion and in accord with various known EVH examples, did not directly come out of the equations of motion. In order to establish near horizon limit as a decoupling limit for the near-EVH case it is crucial to show $g^{(1)}_a=0$ as a generic outcome of equations of motion or some energy conditions.

In our analysis and proof of our theorems, especially theorem 2, we assumed that the $\gamma_{d-3}$ part has finite volume. Nonetheless, there are examples of ``EVH black branes'' \cite{EVH-branes} where in the near horizon limit we find AdS$_3$ throats. For these cases one needs to replace horizon area $A_h$ in the definition \eqref{EVH-def-1} by the entropy density of the solution. It would be interesting to extend our theorems for such cases.

Finally, we would like to point out that the near-horizon EVH solutions, and in light of the theorems we proved, may provide  convenient reduction ansatzs to three dimensions. This point was used to construct AdS$_3$ or dS$_3$ solutions within eleven, ten or lower dimensional supergravity theories \cite{Flat-reduction}. It would be nice to see if similar ideas could be useful in constructing dS$_4$ solutions in string theory.


\section*{Acknowledgement}

 MMSHJ and SS  would like to thank Allameh Tabatabaii Prize Grant of Boniad Melli Nokhbegan of Iran. MMSHJ, SS and MHV would like to thank the ICTP network project NET-68.

\bibliographystyle{plain}

\end{document}